\newcommand{\eps}{\varepsilon}
\DeclareMathOperator{\diag}{diag}
\DeclareMathOperator*{\maximize}{maximize}
\DeclareMathOperator*{\subjectto}{subject\ to}
\DeclareMathAlphabet\mathbfcal{OMS}{cmsy}{b}{n}
\newcommand{\mat}[1]{\boldsymbol{#1}}
\newcommand{\bmat}[1]{\begin{bmatrix} #1 \end{bmatrix}}
\providecommand{\eye}{\mat{I}}
\providecommand{\mA}{\ensuremath{\mat{A}}}
\newcommand{\m}{\boldsymbol}
\newcommand{\mb}[1]{\mathbf{#1}}
\newcommand{\mc}[1]{\mathcal{#1}}
\newcommand{\mbb}[1]{\mathbb{#1}}
\newcommand{\mr}[1]{\mathrm{#1}}
\DeclarePairedDelimiter\abs{\lvert}{\rvert}\DeclarePairedDelimiter\norm{\lVert}{\rVert}
\let\oldabs\abs
\def\abs{\@ifstar{\oldabs}{\oldabs*}}
\let\oldnorm\norm
\def\norm{\@ifstar{\oldnorm}{\oldnorm*}}
\NewDocumentCommand{\evaluat}{sO{\big}mm}{\IfBooleanTF{#1}
	{\mleft. #3 \mright|_{#4}}
	{#3#2|_{#4}}}
\tikzset{
	tangent/.style={decoration={
			markings,mark=at position #1 with {
				\coordinate (ta) at (0,0);
				\coordinate (tb) at (0.1,0);
			}
		},postaction=decorate},
	tangent/.default=0.5
}
\definecolor{Vgray}{RGB}{247,247,247}
\definecolor{Vedge}{RGB}{200,200,200}
\definecolor{Sub1purple}{RGB}{200,200,255}
\definecolor{Sub1edge}{RGB}{205,215,238}
\definecolor{Sub2yellow}{RGB}{251,234,206}
\definecolor{Sub2edge}{RGB}{251,234,206}
\definecolor{Sub3green}{RGB}{215,238,210}
\definecolor{Sub3edge}{RGB}{215,232,210}
\definecolor{Sub1color}{rgb}{0.85, 0.9, 1}
\definecolor{Sub2color}{rgb}{1, 0.9, 0.7}
\definecolor{Sub3color}{rgb}{0.8, 1, 0.8}
\definecolor{Sub4color}{RGB}{200,200,255}
\definecolor{Sub5edge}{RGB}{140,120,80}
\definecolor{Sub5color}{RGB}{255,245,225}
\definecolor{Sub6edge}{RGB}{100,130,150}
\definecolor{Sub6color}{RGB}{220,240,255}
\definecolor{Sub7edge}{RGB}{130,100,150}
\definecolor{Sub7color}{RGB}{245,230,255}
\definecolor{Sub8edge}{RGB}{150,130,100}
\definecolor{Sub8color}{RGB}{255,250,230}
\definecolor{Sub9edge}{RGB}{90,140,140}
\definecolor{Sub9color}{RGB}{220,250,250}
\definecolor{Sub10edge}{RGB}{110,110,110}
\definecolor{Sub10color}{RGB}{240,240,240}
\pgfplotsset{compat=1.18}  \usepgfplotslibrary{fillbetween}
\definecolor{LightBlue}{rgb}{0.7, 0.85, 1}
\definecolor{LightOrange}{rgb}{1, 0.8, 0.6}
\definecolor{LightGreen}{rgb}{0.843, 0.933, 0.824}
\tikzstyle{box} = [rectangle, draw, thick, rounded corners,
\tikzstyle{colorbox1} = [rectangle, draw=black, rounded corners, fill=Sub1purple,
\tikzstyle{colorbox2} = [rectangle, draw=black, rounded corners, fill=Sub2yellow,
\tikzstyle{colorbox3} = [rectangle, draw=black, rounded corners, fill=Sub3green,
\tikzstyle{colorbox4} = [rectangle, draw=black, rounded corners, fill={rgb,255:red,247; green,247; blue,247},text centered, font=\scriptsize, minimum height=0.5cm, text width=3cm]
\tikzstyle{arrow} = [thick,->,>=stealth]
\tikzstyle{dashedarrow} = [thick, dashed, ->, >=stealth]
\setlist[itemize]{leftmargin=*}
\definecolor{LCSS}{HTML}{004498}
\newcommand{\parstartc}[1]{\noindent \textbf{\textcolor{LCSS}{#1.}}\;}
\newcommand{\titlesc}[1]{\title{\Large \vspace{0.7cm} \LARGE \centering {\textsc{{#1}}}}}
\renewenvironment{proof}{\par\vspace{0.3em}\noindent\textbf{\textit{\textcolor{LCSS}{Proof.}}}\enspace}{\hfill $\blacksquare$\vspace{0.3em}}
\newenvironment{proofofc}[1]{\vspace{0.3em}\noindent\textbf{\textit{\textcolor{LCSS}{Proof of Theorem}}}~\textcolor{LCSS}{\textbf{\textit{\ref{#1}}.}}\enspace}{\hfill$\blacksquare$\vspace{0.3em}}
\newtheorem{theorem}{\textbf{\textcolor{LCSS}{Theorem}}}[section]
\newtheorem{mydef}{\textbf{\textcolor{LCSS}{Definition}}}[section]
\newtheorem{myrem}{\textbf{\textcolor{LCSS}{Remark}}}[section]
\newtheorem{asmp}{\textbf{\textcolor{LCSS}{Assumption}}}[section]
\newtheorem{mycor}{\textbf{\textcolor{LCSS}{Corollary}}}[section]
\newtheorem{myprs}{\textbf{\textcolor{LCSS}{Proposition}}}[section]
\def\@thm#1#2{\refstepcounter{#1}\@ifnextchar[{\@ythm{#1}{#2}}{\@xthm{#1}{#2}}}
\def\@xthm#1#2{\@begintheorem{#2}{\csname the#1\endcsname}\ignorespaces}
\def\@ythm#1#2[#3]{\@opargbegintheorem{#2}{\csname the#1\endcsname}{#3}\ignorespaces}
\def\@endtheorem{\par}
\def\@begintheorem#1#2{\par\noindent\textbf{\textit{\textcolor{LCSS}{#1 #2.}}}\@ifnextchar[{\@withinfo}{\enspace\ignorespaces}}
\def\@withinfo[#1]{\textbf{\textcolor{LCSS}{ (#1)}}\enspace\ignorespaces}
\def\@opargbegintheorem#1#2#3{\par\noindent\textbf{\textit{\textcolor{LCSS}{#1 #2. (#3).}}}\enspace\ignorespaces}
\author{Mohamad H. Kazma, \textit{Graduate Student Member, IEEE} and Ahmad F. Tah$\text{a}^{\diamond}$, \textit{Member, IEEE}\vspace{-0.3cm}
	\thanks{
		$^\diamond$Corresponding author. This work is supported by National Science Foundation under Grants 2152450 and 2151571. The authors are with the Civil $\&$ Environmental Engineering and Electrical $\&$ Computer Engineering Departments, Vanderbilt University, 2201 West End Ave, Nashville, Tennessee 37235. Emails: mohamad.h.kazma@vanderbilt.edu, ahmad.taha@vanderbilt.edu.}
}
\begin{document}
\newdimen\origiwspc\newdimen\origiwstr\origiwspc=\fontdimen2=\fontdimen3
\makeatletter
\def\ps@arxivheadings{
\def\@oddfoot{}\def\@evenfoot{}
\def\@oddhead{\hbox{}\scriptsize\rightmark\hfil\thepage}
\def\@evenhead{\scriptsize\thepage\hfil\leftmark\hbox{}}}
\makeatother
\thispagestyle{arxivheadings}
\pagestyle{arxivheadings}
\markboth{IEEE Transactions on Control of Network Systems, in press, August 2026}{IEEE Transactions on Control of Network Systems, in press, August 2026}
\begin{abstract} 
	Observability quantification is a key problem in dynamic network sciences. While it has been thoroughly studied for linear systems, observability quantification for nonlinear networks is less intuitive and more cumbersome. One common approach to quantify observability for nonlinear systems is via the \textit{Empirical Gramian} (Empr-Gram)---a generalized form of the Gramian of linear systems. In this paper, we produce three new results. First, we establish that a variational form of discrete-time autonomous nonlinear systems yields a so-called \textit{Variational Gramian} (Var-Gram) that is equivalent to the classic Empr-Gram under linear output mappings; the former being easier to compute than the latter. Via \textit{Lyapunov exponents} derived from Lyapunov's direct method, the paper's second result derives connections between existing observability measures and Var-Gram. The third result demonstrates the applicability of these new notions for sensor selection placement in nonlinear systems. Numerical case studies demonstrate these three developments and their merits.
\end{abstract}
\vspace{-0.2cm}
\begin{IEEEkeywords}
	Nonlinear variational dynamics, nonlinear observability, observability Gramian, Lyapunov exponents.
\end{IEEEkeywords}

\vspace{-0.6cm}
\section{Introduction and Contributions}\label{sec:Intro}
\IEEEPARstart{O}{bservability} is most generally defined as the ability to reconstruct the state variables of a dynamic system from limited output measurements~\cite{Kawano2013}. 
For deterministic~\cite{Kalman1963} and stochastic~\cite{Aoki1968} linear systems, quantifying observability is well-established.
However, the direct extension of observability notions from linear to nonlinear systems is not straightforward. We briefly discuss this literature next. 

An analytic differential approach introduced in~\cite{Hermann1977} evaluates observability by computing the \textit{Lie derivatives} around an initial point. 
Lie derivatives are typically avoided in practice for two reasons. $(i)$ Lie derivatives are computationally expensive and require the calculation of higher order derivatives~\cite{Whalen2015a}, and $(ii)$ the resulting observability measure is a rank condition that is qualitative in nature~\cite{Krener2009} and difficult to optimize. That is, the quantification is binary and thus does not lend itself easily to optimization problems such as sensor selection.

Other formulations can also be utilized to assess a nonlinear system's observability. 
One method follows from formulating the \textit{empirical observability Gramian} (Empr-Gram) of the system by considering an impulse response approach~\cite{Moore1981,Lall1999,Lall2002,Kunwoo2023}.
However, properly scaling the internal state variables such that the Gramian's eigenvalues accurately capture local variations in the states is not straightforward~\cite{Krener2009}. Furthermore, the Empr-Gram can be extended to account for stochastic nonlinear systems from a statistical point of view, as proposed in~\cite{Powel2020b, Kunwoo2023}.
A moving horizon approach for discretized nonlinear dynamics is introduced in~\cite{Haber2018} and further developed in~\cite{Kazma2023f}; it is based on a moving horizon formulation and offers a more robust solution than the Empr-Gram. The proposed approach does not establish a direct relation to the linear observability Gramian and the notions of Lyapunov stability. 

Lyapunov's second method is considered as a basis for observability of linear systems, yet a general theory to formulate Lyapunov functions for nonlinear systems is lacking~\cite{Liu2016a}. In the fields of chaos and ergodicity, a well-known method for assessing the stability of nonlinear system trajectory stems from Lyapunov's direct method on stability~\cite{AleksandrMikhailovichLyapunov1892}.
The direct method provides a characteristic spectrum of \textit{Lyapunov exponents} (LEs) that yields a basis for exponential asymptotic stability of dynamical systems.
This often-overlooked notion of stability in the field of control theory has recently been investigated in several areas. This includes studies that are related to bounds and observer design for linear time-varying (LTV) systems~\cite{Czornik2019} and model predictive control~\cite{Krishna2023}. 

An important aspect of the aforementioned stability method is that it is based on a \textit{variational} representation of the dynamical system~\cite{Barreira1998}, which means that the system is described by the evolution of infinitesimal state variations along its trajectory. Such variational system representations are considered for a wide class of nonlinear systems~\cite{Cortes2005}. The variational system can be viewed as an LTV model constructed along the tangent space of the nonlinear system~\cite{Kawano2021}, rendering the computation of an observability Gramian more efficient. In a recent study, considering the variational system of the general state-space formulation introduced in~\cite{Cortes2005}, an \textit{empirical differential Gramian} is formulated for the continuous-time domain~\cite{Kawano2021}. The introduced Gramian is similar to the Empr-Gram~\cite{Moore1981,Lall2002}; both are based on impulse response and empirical data, rendering both formulations computationally intensive. The impulse response around an initial state results in a fixed state trajectory. As such, the aforementioned observability results are considered local state trajectory dependent.

In this paper, we introduce a nonlinear observability Gramian that is based on a discrete-time variational representation. Our first objective is to illustrate that the proposed \textit{variational observability Gramian} (Var-Gram) is equivalent to Empr-Gram in characterizing local observability; the former being more computationally efficient. We also show that the Var-Gram reduces to the Gramian for a linear time-invariant (LTI) system. The second objective is to illustrate the connections relating LEs and Var-Gram measures. Based on this, we introduce conditions for observability of general nonlinear systems. The third objective is to showcase how these aforementioned developments can be applied to efficiently solve the sensor node selection (SNS) problem for nonlinear systems, which heavily relies on quantifying observability.

\parstartc{Paper contributions} The main contributions of this paper are three-fold. $(i)$ We formulate a new method for computing the observability Gramian of nonlinear systems with no inputs. This method is derived from a variational system representation of discrete-time nonlinear dynamics. We show and provide evidence that the proposed Var-Gram is equivalent to the Empr-Gram under linear output mappings, with extension to nonlinear outputs discussed in Appendix~\ref{apndx:generaloutput}; it reduces to the linear Gramian for a stable linear time-invariant (LTI) system. $(ii)$ We show that observability measures under Var-Gram are equivalent to LEs. 
We derive a local observability condition for nonlinear systems that is based on the spectral radius of the proposed Gramian. Such condition offers a bound for the observability of the systems in relation to LEs. $(iii)$ We show that the Var-Gram is a modular set function with respect to the binary sensor allocation vector, and that specific observability measures based on the proposed Var-Gram are submodular.
This submodularity enables the solution of the SNS problem in nonlinear networks to be scalable. This is analogous to SNS in linear networks where the submodularity of the linear Gramian is well-established.

\parstartc{Broader impacts} Establishing a connection between nonlinear observability and the LEs allows us to leverage the plethora of data-driven methods for computing LEs; see~\cite{Balcerzak2020} and references therein. Based on such methods, nonlinear observability can be evaluated from a data-driven perspective. Furthermore, the link between LEs and dynamical properties such as entropy~\cite{Young2013}, allow for studying observability in stochastic and chaotic systems. Investigating these prospects is outside the scope of this paper and merits future work.

\parstartc{Paper organization} As illustrated in Fig.~\ref{fig:framework}, the paper is organized as follows:~Section~\ref{sec:prelims} introduces the problem formulation and provides preliminaries on nonlinear observability. Section~\ref{sec:ObsGram} develops the theory behind the construction of the proposed Var-Gram. The connection between the Var-Gram and LEs is presented in Section~\ref{sec:LyapExp}. Section~\ref{sec:SNS1} presents the Var-Gram properties for the SNS problem. The numerical results are presented in Section~\ref{sec:casestudies}. Section~\ref{sec:conclusion} concludes this paper. 

\parstartc{Notation} Let $\mathbb{N}$, $\mathbb{R}$, $\mathbb{R}^n$, and $\mathbb{R}^{p\times q}$ denote the set of natural numbers, real numbers, real-valued column vectors of size $n$, and $p$-by-$q$ real matrices respectively. The symbol $\otimes$ denotes the Kronecker product. The cardinality of a set $\mc{N}$ is denoted by $|\mc{N}|$. The set $\emptyset$ represents the empty set. The operators $\mr{det}(\mA)$, $\mr{rank}(\mA)$, $\mr{trace}(\mA)$, and $\lambda(\mA)$ return the determinant, rank, trace, and the vector of eigenvalues of matrix $\mA$. Let $\m{I}_n \in \mathbb{R}^{n \times n}$ represent the identity matrix of size $n$. For a positive semidefinite (PSD) matrix $\m{A} \succeq 0$, $\log(\m{A})$ denotes the matrix logarithm. The operator $\{\m x_{i}\}_{i=0}^{\mr{N}} \in \mathbb{R}^{\mr{N}n}$ constructs a column vector that concatenates vectors $\m{x}_i \in \mathbb{R}^{n}$ for all $i \in \{0,1, \cdots, \mr{N}\}$. The operator $\mr{diag}\{\alpha_i\}_{i=1}^{n} \in \mathbb{R}^{n \times n}$ constructs a diagonal matrix from the scalars $\alpha_i$, while $\mr{diag}\{\m x\} \in \mbb{R}^{n \times n}$ constructs a diagonal matrix from vector $\m x\in  \mbb{R}^{n}$. The dot-product of two matrix-valued vectors $\m{\xi}$ is represented as $\langle\m{\xi},\m{\xi}\rangle:= \m{\xi}^{\top}\m{\xi}$, where the superscript $\top$ denotes the transpose. The $\mc{L}_2$-$\mr{norm}$ of vector $\m{x}$ is denoted by $||\m{x}||_{2} := \sqrt{\langle\m{x},\m{x}\rangle}$. For a matrix $\m{A}$, $||\m{A}||$ denotes the induced $\mc{L}_2$-$\mr{norm}$. The operator $\m{h} \circ \m{f} := \m{h}\left(\m{f}(\m{x})\right)$ denotes the composition of functions. The notation $ t_0$ and $t$ (as subscripts and superscripts) of a flow map are used for continuous-time mappings, while $0$ and $k$ are used for discrete-time mappings.

\begin{figure*}[t]
	\centering
	\scriptsize
	\resizebox{1\textwidth}{!}{
		\begin{tikzpicture}[
			node distance=0.6cm and 0.95cm,
			box/.style={draw, rounded corners, fill=blue!5, minimum height=1.3cm, minimum width=2.5cm, align=center},
			arrow/.style={->,>=stealth, thick,shorten <=5pt, shorten >=5pt},
			label/.style={font=\normal, align=center}
			]
			
\node[box, fill=Sub1purple, minimum width=4cm, minimum height=1.6cm] (setup) {};
			\node[font=\scriptsize, align=center] at ($(setup.north)+(0,-0.3)$) {\textbf{System dynamics}\\\textit{Section~\ref{sec:prelims}}};
			
\node[fill=Sub1purple!60!Sub1color, draw=black, rounded corners=2pt, font=\scriptsize, minimum width=1.5cm, minimum height=0.7cm] (setupLeft) at ($(setup.center)+(-1,-0.35)$) {Continuous~\eqref{eq:model_CT}};
			\node[fill=Sub1purple!60!Sub1color, draw=black, rounded corners=2pt, font=\scriptsize, minimum width=1.5cm, minimum height=0.7cm] (setupRight) at ($(setup.center)+(1,-0.35)$) {Discrete~\eqref{eq:model_DT}};
			\draw[->, thick, >=stealth] (setupLeft.east) -- (setupRight.west);
			
\node[box, fill=Sub1purple, minimum width=3.5cm] (obs) [below=0.2cm of setup] {
				\textbf{Nonlinear observability}\\[1pt]
				Definitions~\ref{def:Hermkern},~\ref{def:Hermkern2}, and~\ref{def:uniformobs}\\[1pt]
				\textit{Section~\ref{sec:Obs}}
			};
			\draw[thick, double distance=2pt, double=Sub1purple!90!Sub1color, -, shorten >=0pt, shorten <=0pt] ($(setup.south)+(0,0)$) -- ($(obs.north)+(0,0)$);
			
\begin{scope}[on background layer]
				\node[draw=gray, fill={rgb,255:red,247; green,247; blue,247}, rounded corners, inner sep=4pt, fit=(setup)(setupLeft)(setupRight)(obs)] (P1) {};
			\end{scope}
			
\node[box, fill=Sub1color] (p2) at ($(setup)+(5.4cm,0.15cm)$) {
				\textbf{Empr-Gram}\\[1pt]
				$\displaystyle \mathbf{W}_o^{\varepsilon}=\dfrac{1}{4\varepsilon^2}\sum_{k}(\Delta\mathbf{Y}_k^{\varepsilon})^{\top}\Delta\mathbf{Y}_k^{\varepsilon}$\\[1pt]
				\textit{Section~\ref{sec:Empr}}
			};
			
			\node[box, fill=Sub1color] (submod) [below= 0.5cm of p2] {
				\textbf{Var-Gram}\\[1pt]
				$\m{V}_{o} = {\m{\Psi}(\m{x}_{0})}^{\top}\m{\Psi}(\m{x}_{0})$\\[1pt]
				\textit{Section~\ref{sec:ObsGram}}
			};
			
\draw[thick, double distance=2pt, double=Sub1color!90!Sub1color, -, shorten >=0pt, shorten <=0pt] 
			($(p2.south)+(0.2,0)$) -- ($(submod.north)+(0.2,0)$)
			node[midway, left=12pt, font=\scriptsize] {Equivalence};
			\draw[thick, double distance=2pt, double=Sub1color!0!Sub1color, -, shorten >=0pt, shorten <=0pt] 
			($(p2.south)+(-0.2,0)$) -- ($(submod.north)+(-0.2,0)$)
			node[midway, right=0.2pt, font=\scriptsize] {in \hspace*{0.12cm}Theorem~\ref{theo:Equivelence}};
			
			\begin{scope}[on background layer]
				\node[draw=gray, fill={rgb,255:red,247; green,247; blue,247}, rounded corners, inner sep=4pt, fit=(p2)(submod)] (P2) {};
			\end{scope}
			
\node[box, fill=Sub1color] (p3) [right= 1.7cm of submod] {
			\vspace{0.2cm} \\	\textbf{SNS problem $\mb{P1}$}\\[1pt]
				$\underset{\mc{S}\subseteq\mc{V},\; |\mc{S}|=r}{\max} \;\; \mc{O}(\mc{S})$\\[2pt]
				\textit{Section~\ref{sec:SNS1}}
			};
			\node[box, fill=Sub1color, anchor=north, font=\scriptsize, align=center,minimum height=0.5cm] 
			(p3Label) at ($(p3.north)+(0,0.4)$) 
			{\textbf{Submodular measures} $\mc{O}(\mc{S})$
			};

\node[box, fill=Sub1color] (submod2) [above= 0.5cm of p3] {
				\textbf{LEs and observability}\\[1pt]
				Theorems~\ref{theo:logdet} and~\ref{theo:spectral-radius}\\[1pt]
				\textit{Section~\ref{sec:LyapExp}}
			};
			
			\begin{scope}[on background layer]
				\node[draw=gray, fill={rgb,255:red,247; green,247; blue,247}, rounded corners, inner sep=4pt, fit=(p3)(submod2)] (P3) {};
			\end{scope}
			
\draw[thick, double distance=1.8pt, double=Sub1color!90!Sub1color, -, shorten >=0pt, shorten <=0pt]
			($(submod.east)+(0,0.4)$) to[out=0, in=180] ($(submod2.west)+(0,-0.2)$);
			
			\draw[thick, double distance=1.8pt, double=Sub1color!0!Sub1color, 
			-, shorten >=0pt, shorten <=0pt] 
			($(submod.east)+(0,-0.2)$) 
			to[out=0, in=265] 
			($(p3Label.south)+(-1.5,0)$);
			
			\draw[thick, double distance=2pt, double=Sub1purple!0!Sub1color, -, shorten >=0pt, shorten <=0pt] ($(submod2.south)+(0,0)$) -- ($(p3Label.north)+(0,0)$);

\node[box, fill=Sub1purple, right=of P3, minimum width=4cm] (sim) {
				\textbf{Numerical case studies}\\[1pt]
				Illustration of theoretical results\\
				Solving SNS problem $\mb{P1}$\\[1pt]
				\textit{Section~\ref{sec:casestudies}}
			};
			
\node[draw=gray!50, dashed, thick, fit=(P2)(P3), inner sep=5pt, name=PartitionToSP] {};
			
\draw[arrow] (P1) -- (PartitionToSP);
\draw[arrow] (PartitionToSP) -- (sim);
		\end{tikzpicture}
	}
	\vspace{-0.5cm}
	\caption{Illustration of nonlinear observability quantification and the corresponding optimal sensor selection. Empirical and variational Gramians (Sections~\ref{sec:Empr},~\ref{sec:ObsGram}) are shown to be equivalent (Theorem~\ref{theo:Equivelence}). The Var-Gram is connected to LEs and observability (Section~\ref{sec:LyapExp}). The Var-Gram results in a submodular sensor selection problem $\mb{P1}$ (Section~\ref{sec:SNS1}), illustrated through case studies (Section~\ref{sec:casestudies}).}
	\label{fig:framework}
	\vspace{-0.6cm}
\end{figure*}
\vspace{-0.1cm}
\section{Preliminaries and Definitions}\vspace{-0.1cm}\label{sec:prelims}
Let the following represent a general continuous-time nonlinear dynamic network without input.
\begin{equation}\label{eq:model_CT}
	\dot{\m{x}}(t) = \m f(\m{x}(t)), \; \quad
	\m{y}(t) =  \m{h}({\m{x}}(t)),
\end{equation}
where the smooth manifold $\mc{M}\subseteq \mathbb{R}^{n_x}$ represents the state-space under the action of system dynamics, the system state vector evolving in $\mc{M}$ is denoted as $\m x(t):= \m{x} \in\mathbb{R}^{n_x}$, and $\m y(t) := \m{y}\in \mathbb{R}^{n_{y}}$ is the global output measurement vector.
The nonlinear mapping function $\m{f}(\cdot):\mc{M} \rightarrow\mbb{R}^{n_x}$ and nonlinear mapping measurement function $\m{h}(\cdot) :\mc{M} \rightarrow\mbb{R}^{n_y}$ are smooth and at least twice continuously differentiable. 
\begin{asmp}\label{assumption:compact}
There exists a compact set $\mc{\m{X}} \subseteq \mc{M}$ such that, for any system initialization $\m{x}_{0} \in \mc{\m{X}}_{0}$, the system trajectory remains in $\mc{\m{X}}$ for all $t \ge t_0 = 0$. The compact set $\mc{\m{X}}$ contains all the feasible system trajectories.
\end{asmp}

This assumption dictates that $\m{x}$ belongs to a compact set $\mc{X}$ along the system trajectory. This is not restrictive when considering nonlinear networks such as cyber-physical networks with bounded state $\m{x}$. We introduce the continuous model~\eqref{eq:model_CT} to establish nonlinear observability concepts in Section~\ref{sec:Obs}. This enables the derivation of variational observability conditions and Proposition~\ref{prs:discrete_var}. Furthermore, numerical discretization of the continuous-time model renders a system amenable towards various SNS applications.

To that end, we consider a discrete-time nonlinear dynamical system without inputs, representing the action of the system dynamics evolving on the smooth manifold $\mathcal{M}$. We adopt the \textit{implicit Runge-Kutta} (IRK) method~\cite{Iserles2009} as the discretization method. While there are myriad discretization methods for general nonlinear systems, IRK is utilized since it offers a wide range of applicability to systems with varying degrees of stiffness; refer to~\cite{Atkinson2011}. 
The continuous-time system~\eqref{eq:model_CT} can be expressed in the following discrete-time form.
\begin{subequations}\label{eq:model_DT}
	\begin{align}
		\m{x}_{k+1} &= \m x_{k} +\tilde{\m{f}}(\m{x}_{k}),\label{eq:RKutta_dynamics_compact}\\
		\m{y}_{k} &=  \m{h}(\m{x}_{k}),\label{eq:DT_measurement_model}
\end{align}
\end{subequations}
where $k\in\mbb{N}$ is the discrete-time index such that $\m x_k = \m x(kT)$ and $T > 0$ denotes the discretization period.
The nonlinear mapping function $\tilde{\m{f}}(\cdot)\in \mathbb{R}^{n_x}$ represents the dynamics depicted by $\m{f}(\cdot)$ under the action of a discrete-time model. The nonlinear mapping function $\tilde{\m{f}}(\cdot)$ is defined for the IRK method as $\tilde{\m{f}}(\m{x}_k):= \tfrac{T}{4}\left(\m f(\m \zeta_{1,k+1})+3\m f(\m \zeta_{2,k+1})\right)$. Vectors $\m \zeta_{1,k+1},\m \zeta_{2,k+1}\in\mbb{R}^{n_x}$ are auxiliary for computing $\m x_{k+1}$ provided that $\m x_{k}$ is given. Refer to Appendix~\ref{apdx:Runge-Kutta} for details on the IRK method's auxiliary vectors that are essential for analytically computing the Var-Gram.
\vspace{-0.2cm}
\subsection{Observability of nonlinear systems}\vspace{-0.1cm}\label{sec:Obs}
To make this paper self-contained, we briefly recall the notions of nonlinear observability. 
\subsubsection{Flow maps and output mapping} Consider $\m{\phi}_{t_0}^{t} : \mc{M} \rightarrow \mc{M}$ which maps a state $\m{x}_{0} \in \mc{{X}}_0$ at initial time $t_0$ to a state $\m{x} \in \mc{{X}}$ at time $t>t_0$~\cite{Shadden2005a, Iserles2009}. The continuous-time flow map of nonlinear system~\eqref{eq:model_CT} can be written as
\begin{equation}\label{eq:flow_CT}
	\m{\phi}_{t_0}^{t}(\m{x}_{0}) := \m{x}(t)=\m{x}(t_0) +
	\int_{t_0}^{t} \m{f}(\m{x}(\tau))d\tau.
\end{equation}

This interpretation of the nonlinear dynamical system~\eqref{eq:model_CT} defines the flow of phase points along the phase curve, thereby describing the time evolution of the state variables pertaining to the dynamical system~\cite{Arrowsm1994}.
\subsubsection{Distinguishability and local observability} Observability of the nonlinear dynamical system is then defined as the ability to identify the initial states $\m{x}_{0}$ satisfying Assumption~\ref{assumption:compact} for $t > t_0$. Definitions~\ref{def:Hermkern} and \ref{def:Hermkern2} are based on the notions of observability from the work of Hermann and Krener~\cite{Hermann1977}.
\begin{mydef}[distinguishability~\cite{Hermann1977}]\label{def:Hermkern}
	Any two initial states $\m{x}_0$ and $\m{x}_1 \in \mc{\m{X}}_0$ are \textit{indistinguishable} if and only if for any $\m{\phi}^{t}_{t_0} \in \mc{\m{X}}$, we have $\m{h}\circ\m{\phi}^{t}_{t_0}(\m{x}_0) = \m{h}\circ\m{\phi}^{t}_{t_0}(\m{x}_1) \; \forall \; t > t_0$.
\end{mydef}
\begin{mydef}[local observability~\cite{Hermann1977}]\label{def:Hermkern2}
	A nonlinear system is \textit{locally weakly observable} at $\m{x}_0$ if there exists a neighborhood $\mc{D} \in \mc{\m{X}}_0$ such that $\m{h}\circ\m{\phi}_{t_0}^{t}(\m{x}_{0}) \neq \m{h}\circ\m{\phi}_{t_0}^{t}({\m{x}_1})$ for $t > t_0$ for all $\m{x}_0 \neq \m{x}_1 \in \mc{D}$, i.e., if the compositions are distinguishable. The system is said to be \textit{locally observable} if it is \textit{locally weakly observable} for all $\m{x}_0 \in \mc{X}_0$.
\end{mydef}

The above definitions\footnote{Frankly, these are theoretical results and more than just \textit{definitions} but we stick to their usage as definitions to comply with the recent literature.} are essential for the subsequent observability definitions and the derivation of the observability conditions in Section~\ref{sec:LyapExp}.

\subsubsection{Notion of uniform observability}\label{subsec:uniformobs}
A seemingly stronger definition of observability (Definition~\ref{def:Hermkern2}) for continuous systems is investigated in~\cite{Hanba2017}. It shows that an observation window of finite length $\mr{N} \in \mbb{N}$ exists given the realization of distinguishability (Definition~\ref{def:Hermkern}). Such sequence of measurements uniquely determines the initial state $\m{x}_0$ on the compact set $\mc{{X}}_{0}$. The existence of such finite-observation window defines the \textit{uniform observability} criterion for discrete-time nonlinear systems~\cite{Hanba2009}. 
\begin{mydef}[uniform observability~\cite{Hanba2009}]\label{def:uniformobs}
	The system~\eqref{eq:model_DT} is said to be \textit{uniformly observable} over compact set $\mc{{X}}$ if there exists a finite-observation window $\mathrm{N} \in \mbb{N}$ such that the output sequence 
	\begin{equation}\label{eq:UniformObs}
		\m{\xi}(\m{x}_0) := \big \{\m{y}_{k} \big \}_{k=0}^{\mr{N}-1}
		\in \mathbb{R}^{\mr{N}n_y},
		\vspace{0.1cm}
	\end{equation}
	is injective  (one-to-one) with respect to $\m{x}_0\in \mc{\m{X}}_0$. This implies that the Jacobian of $\m{\xi}(\m{x}_0)$ is full rank, i.e., $ \mathrm{rank}\left(\tfrac{\partial\m{\xi}(\m{x}_0)}{\partial \m{x}_0}\right) = n_x\; \forall \; \m{x}_{0} \in \mc{X}_0.$ This rank condition is a sufficient condition for uniform observability in~$\mc{X}_0$, a consequence of the real Jacobian conjecture~\cite{SmaleSteve1998}.
\end{mydef}

The following theorem shows that distinguishability along with the observability rank condition are equivalent to uniform observability. 
\begin{theorem}[\hspace{-0.012cm}\cite{Hanba2009}, Th. 7]\label{theo:bijection}
		Consider the discrete-time nonlinear system~(2) and assume it satisfies Definition~\ref{def:Hermkern} (distinguishability). The system is then locally observable in the set $\mc{X}_0$ if there exists a finite horizon $\mr{N}_x < \mr{N}$ such that the Jacobian $\tfrac{\partial\m{\xi}(\m{x}_0)}{\partial \m{x}_0} = \left\{\tfrac{\partial \m{y}_{k}}{\partial \m{x}_{0}}\right\}_{k=0}^{\mr{N_x}-1}$ 
		is full rank, i.e., 
		$\mathrm{rank}\left(\tfrac{\partial\m{\xi}(\m{x}_0)}{\partial \m{x}_0} \right)= n_x$ 
		for all $\m{x}_0 \in \mc{X}_0$.
\end{theorem}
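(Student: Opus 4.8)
The plan is to argue locally around each point of $\mc{X}_0$ and then patch the local statements together. First I would fix an arbitrary $\m{x}_0 \in \mc{X}_0$ and consider the finite-horizon output map $\m{\xi}_{\mr{N}_x}(\cdot) := \{\m{h}\circ\m{\phi}_0^{k}(\cdot)\}_{k=0}^{\mr{N}_x-1} : \mc{X}_0 \to \mathbb{R}^{\mr{N}_x n_y}$. This map is smooth: $\m{f}$ and $\m{h}$ are smooth, hence so are the discrete flow maps $\m{\phi}_0^{k}$ built from the IRK step~\eqref{eq:RKutta_dynamics_compact} and their compositions with $\m{h}$. By hypothesis its Jacobian $\partial\m{\xi}_{\mr{N}_x}/\partial\m{x}_0$ has rank $n_x$ at $\m{x}_0$, and since $n_x \le \mr{N}_x n_y$ this means $\m{\xi}_{\mr{N}_x}$ is an immersion at $\m{x}_0$; because the rank hypothesis is assumed to hold throughout $\mc{X}_0$, the rank is locally constant and equal to $n_x$ near $\m{x}_0$.

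Next I would invoke the rank (constant-rank) theorem: an immersion is locally injective, so there is a neighborhood $\mc{D}\subseteq\mc{X}_0$ of $\m{x}_0$ on which $\m{\xi}_{\mr{N}_x}$ is injective. Taking the contrapositive of injectivity, for any $\m{x}_1\in\mc{D}$ with $\m{x}_1\neq\m{x}_0$ there is an index $k\in\{0,\cdots,\mr{N}_x-1\}$ with $\m{h}\circ\m{\phi}_0^{k}(\m{x}_0)\neq\m{h}\circ\m{\phi}_0^{k}(\m{x}_1)$, i.e., the two initial conditions are distinguishable on the horizon $\mr{N}_x$. This is exactly the assertion that the system is locally weakly observable at $\m{x}_0$ in the sense of Definition~\ref{def:Hermkern2}. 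Since $\m{x}_0$ was arbitrary and the rank condition holds for all $\m{x}_0\in\mc{X}_0$, local weak observability holds at every point of $\mc{X}_0$, which by Definition~\ref{def:Hermkern2} is local observability on $\mc{X}_0$.

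The role of the distinguishability hypothesis (Definition~\ref{def:Hermkern}) is to close the remaining gap between the finite horizon $\mr{N}_x$ actually tested and the (a priori unbounded) observation window: combined with compactness of $\mc{X}_0$ and the finite-window results recalled before Definition~\ref{def:uniformobs} (cf.~\cite{Hanba2017,Hanba2009}), it guarantees that a finite uniform length $\mr{N}$ exists, and the Cayley--Hamilton / real-Jacobian-conjecture argument underlying Definition~\ref{def:uniformobs} shows that $\rank\big(\partial\m{\xi}_{\mr{N}_x}/\partial\m{x}_0\big)$ is non-decreasing in $\mr{N}_x$ and stabilizes at some $\mr{N}_x<\mr{N}$; hence verifying rank $n_x$ at that $\mr{N}_x$ is no loss of generality and the conclusion persists for all larger horizons. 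I would write this step by extracting the uniform bound $\mr{N}$ from distinguishability plus compactness, then noting the assumed rank-$n_x$ property at $\mr{N}_x$ implies rank $n_x$ at horizon $\mr{N}$ as well.

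The main obstacle I anticipate is the bookkeeping that bridges the differential (Jacobian-rank) condition and the set-theoretic distinguishability notion of Definition~\ref{def:Hermkern2}: one must be careful that (i) $\m{\xi}_{\mr{N}_x}$ maps into a space of dimension $\mr{N}_x n_y \ge n_x$, so the inverse function theorem does not apply directly and the immersion/constant-rank theorem is needed to obtain local injectivity; and (ii) the neighborhood $\mc{D}$ produced is $\m{x}_0$-dependent, so the passage to observability "on $\mc{X}_0$" genuinely requires the uniform-in-$\m{x}_0$ rank assumption. The technical heart — carried by the cited results rather than re-proved here — is the passage from possibly infinite-horizon distinguishability to a finite, uniform horizon on the compact set $\mc{X}_0$.
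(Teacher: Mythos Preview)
The paper does not prove this theorem at all: it is quoted verbatim as a known result from \cite{Hanba2009} (Theorem~7 there) and used as a black box in the subsequent development of the Var-Gram. There is therefore no ``paper's own proof'' to compare your proposal against.

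That said, your argument is the standard and essentially correct route to a statement of this type. The core step --- full-rank Jacobian of $\m{\xi}_{\mr{N}_x}$ at every $\m{x}_0$ implies $\m{\xi}_{\mr{N}_x}$ is an immersion, hence locally injective by the constant-rank theorem, hence the outputs distinguish nearby initial states --- is exactly what is needed for local weak observability in the sense of Definition~\ref{def:Hermkern2}, and your observation that the inverse function theorem does not apply directly (since $\mr{N}_x n_y$ may exceed $n_x$) is the right caveat. One point worth tightening: your proof of local weak observability from the rank condition is self-contained and does not actually use the distinguishability hypothesis (Definition~\ref{def:Hermkern}); you then retrofit a role for it in the final paragraph via the uniform-horizon argument of \cite{Hanba2009,Hanba2017}. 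That is faithful to how the cited result is organized, but be aware that in the theorem as stated the rank condition alone already delivers local observability on $\mc{X}_0$ --- the distinguishability assumption is doing work at the level of \emph{uniform} observability (Definition~\ref{def:uniformobs}) and the existence of the finite $\mr{N}$, which is the content of \cite[Th.~7]{Hanba2009} that the paper is importing rather than re-deriving.
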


For system~\eqref{eq:model_DT}, the discrete-time equivalent of the flow map~\eqref{eq:flow_CT} can be denoted as $\m{\phi}_{0}^{k}(\m{x}_0) \equiv \m{\phi}_{0}^{kT}(\m{x}_0) = \m{x}_k$. The flow map in discrete-time is obtained by replacing the integral in~\eqref{eq:flow_CT} with a summation over the discrete mapping function updates. The composition mapping $\m{h}\circ\m{\phi}_{0}^{k}(\m{x}_0)$ represents the output measurement vector $\m{y}_k$ at time step $k$. The output sequence $\tfrac{\partial\m{\xi}(\m{x}_0)}{\partial \m{x}_0} \in \mathbb{R}^{\mr{N}n_y\times n_x}$ for observation horizon $\mr{N}$ is given by
\begin{equation}\label{eq:xi}
	\tfrac{\partial\m{\xi}(\m{x}_0)}{\partial \m{x}_0} \hspace{-0.05cm}=\hspace{-0.05cm}
	\Big\{\tfrac{\partial \m{y}_{k}}{\partial \m{x}_{0}}\Big \}^{\mr{N}-1}_{k=0}
	\hspace{-0.05cm}=\hspace{-0.05cm} \left\{\tfrac{\partial\m{h}\left(\m{\phi}_{0}^{k}(\m{x}_0)\right)}{\partial\m{\phi}_{0}^{k}(\m{x}_0)}\tfrac{\partial\m{\phi}_{0}^{k}(\m{x}_0)}{\partial \m{x}_0}\right\}^{\mr{N}-1}_{k=0},
\end{equation}
where $\tfrac{\partial\m{h}\left(\m{\phi}_{0}^{k}(\m{x}_0)\right)}{\partial\m{\phi}_{0}^{k}(\m{x}_0)} \in \mathbb{R}^{n_y \times n_x}$ is the Jacobian of the output function with respect to the state $\m{\phi}_{0}^{k}(\m{x}_0) = \m{x}_k$ and $\tfrac{\partial\m{\phi}_{0}^{k}(\m{x}_0)}{\partial{\m{x}_0}}\in \mathbb{R}^{n_x \times n_x}$ is the Jacobian of the flow map with respect to the initial state $\m{x}_0$. We note that Definition~\ref{def:uniformobs} and Theorem~\ref{theo:bijection} imply that a unique solution exists around $\m{x}_0$, thereby establishing local weak observability (Definition~\ref{def:Hermkern2})~\cite{Hanba2009,Haber2018}.

\vspace{-0.1cm}
\subsection{Empirical observability Gramian}\label{sec:Empr}
The Empr-Gram is one approach that can be used to quantify the observability of nonlinear systems; see Section~\ref{sec:Intro}.
The discrete-time Empr-Gram ${\m{W}}_{o}^{\eps}(\m{x}_{0})\in \mathbb{R}^{n_x \times n_x} $~\cite{Lall2002,Krener2009,Powel2015,Kunwoo2023} can be written as
\begin{equation}\label{eq:Emp0bsGram}
	\textit{Empr-Gram:}\;\;\;	{\m{W}}_{o}^{\eps}(\m{x}_{0}) := \frac{1}{4\eps^{2}}\sum_{k=0}^{\mr{N}-1}(\m{\Delta Y}^{\eps}_{k})^{\top}
	\m{\Delta Y}^{\eps}_{k},
\end{equation}
where the impulse response measurement vector $\m{\Delta Y}^{\eps}_{k}:=\m{\Delta Y}^{\eps}_{k}(\m{x}_{0}) = \bmat{\m{y}_k^{+1} - \m{y}_k^{-1},\; \cdots \; , \m{y}_k^{+n_x} - \m{y}_k^{-n_x}} \in \mathbb{R}^{n_y\times n_x}$ and $\m{y}_k^{\pm i} := \m{y}_k(\m{x}_{0}\pm \eps\m{e}_{i},i) = \m{h}(\m{\phi}_0^k(\m{x}_0 \pm \varepsilon \m{e}_i,i)) \in \mathbb{R}^{n_y}$ represents the system output at time step $k$ resulting from an infinitesimal perturbation of $\pm \eps$ applied to the $i$-th component of the initial state $\m{x}_0$. The superscript $i$ in $\m{y}_k^{\pm i}$ indicates that the perturbation is applied in the direction of the $i$-th standard basis vector, denoted by $\m{e}_i \in \mathbb{R}^{n_x}$, where $i \in \{1, 2, \cdots, n_x\}$. The Empr-Gram~\eqref{eq:Emp0bsGram} thus captures the system's response to infinitesimal positive and negative perturbations applied along each standard basis vector of the state space. The horizon for which the Empr-Gram~\eqref{eq:Emp0bsGram} is computed is $\mr{N}$. This choice is based on Definition~\ref{def:uniformobs} of uniform observability and Theorem~\ref{theo:bijection}. It is well-established that the system~\eqref{eq:model_DT} is locally weakly observable at an initial vector $\m{x}_{0} \in \mathcal{\m{X}}_0$ if $\lim_{\eps \rightarrow0}\mr{rank}(	{\m{W}}_{o}^{\eps}(\m{x}_{0})) = n_x$~\cite{Powel2015}.

Note that the Empr-Gram is equivalent to the observability rank condition~\cite{Kalman1963} for an LTI system. That is, for any time-invariant linear system $(\m{A},\; \m{C})$
\begin{equation}\label{eq:linear}
	\m{x}_{k+1} = \m{A}\m{x}_{k}, \; \quad \m{y}_{k}= \m{C}\m{x}_{k},
\end{equation}
the Empr-Gram reduces to the linear observability Gramian ${\m{W}}_{o}^{l}(\m{x}_{0})  := \langle{\m{O}}_{l},{\m{O}}_{l}\rangle \in \mathbb{R}^{n_x \times n_x}$, where the observability matrix is defined as ${\m{O}}_{l} := \bmat{
	\m{C}^{\top},\; 
	(\m{C}\m{A})^{\top},\; 
	\cdots,\; 
	(\m{C}\m{A}^{\mr{N}-1})^{\top}}^{\top} \in \mathbb{R}^{\mr{N}n_y \times n_x}$ for any $\m{x}_0 \in \mathcal{\m{X}}_0$ and any infinitesimal $\eps >0$. This result is established in~\cite[Lemma 7]{Lall1999} and~\cite[Lemma 5]{Lall2002}. In the subsequent sections, we refer to this established result, where we show that, similar to the Empr-Gram (which generalizes the linear Gramian for nonlinear systems), the Var-Gram exhibits analogous properties. For observability of LTI systems, see~\cite{Kalman1963,Liu2016a}. In the next section, we introduce the proposed Var-Gram for nonlinear systems.

\vspace{-0.1cm}
\section{A Variational Approach for Quantifying Nonlinear Observability}\vspace{-0.1cm}\label{sec:ObsGram}
We consider the variational (or prolonged, a term borrowed from~\cite{Cortes2005}) system along the continuous-time flow map $\m{\phi}_{t_0}^{t}$~\cite{Cortes2005,Kawano2021}. The variational system can be written in the following form
\begin{equation}\label{eq:VarCortes}
	\m{\delta}\m{\dot{x}}(t) =  \dfrac{\partial \m{f}(\m{\phi}_{t_0}^{t})}{\partial{\m{\phi}_{t_0}^{t}}}\m{\delta}\m{x}(t), \quad
	\m{\delta}\m{y}(t)=  \dfrac{\partial\m{h}\left(\m{\phi}_{t_0}^{t}\right)}{\partial\m{\phi}_{t_0}^t}
	\m{\delta}\m{x}(t), \vspace{+0.1cm}
\end{equation}
where $\m{\delta}\m{x}(t) \in \mathbb{R}^{n_x}$ is the asymptotic time evolution of the state variables at time $t$ along the system trajectory and $\m{\delta}\m{y}(t) \in \mathbb{R}^{n_y}$ is the corresponding variational measurement. Using the chain rule, the above dynamics can be written as the following infinitesimal variational system~\cite{Kawano2021,Martini2022}.
\begin{subequations}\label{eq:VarNonLinDynamics}
	\begin{align}
		\m{\delta}\m{x}(t)&= \m{\phi}_{t_0}^{t}(\m{x}_0+\m{\delta}\m{x}_0) - \m{\phi}_{t_0}^{t}(\m{x}_0) = \dfrac{\partial\m{\phi}_{t_0}^{t}(\m{x}_0)}{\partial{\m{x}_0}}\m{\delta}\m{x}_0,\\
		\m{\delta}\m{y}(t) &=  \dfrac{\partial\m{h}\left(\m{\phi}_{t_0}^{t}(\m{x}_0)\right)}{\partial\m{\phi}_{t_0}^t(\m{x}_0)}
		\dfrac{\partial\m{\phi}_{t_0}^{t}(\m{x}_0)}{\partial\m{x}_{0}}
		\m{\delta}\m{x}_0,
	\end{align}
\end{subequations}
where $\m{\delta}\m{x}_0 \in \mathbb{R}^{n_x}$ is the infinitesimal perturbation to the initial state $\m{x}_0$. Readers are referred to~\cite{Kawano2021} for the complete derivation of variational system~\eqref{eq:VarNonLinDynamics}. The following result provides a discrete-time analogue of variational system~\eqref{eq:VarNonLinDynamics}.

\begin{myprs}\label{prs:discrete_var}
	The infinitesimal discrete-time variational system representation of the continuous-time variational equations~\eqref{eq:VarNonLinDynamics} can be written as
	\begin{subequations}\label{eq:DiscVar}
		\begin{align}
\m{\delta}\m{x}_{k+1} &=
\m{\Phi}_{0}^{k}(\m{x}_0)\m{\delta}\m{x}_0 = \left(\m{I}_{n_x}  +\tfrac{\partial\tilde{\m{f}}(\m{x}_{k})}{\partial\m{x}_{k}}\right)\tfrac{\partial\m{x}_{k}}{\partial\m{x}_{0}}\m{\delta}\m{x}_0,\hspace{-0.2cm}\\
			\m{\delta}\m{y}_{k}
			&=
\m{\Psi}_{0}^{k}(\m{x}_{0})\m{\delta}\m{x}_{0} = \tfrac{\partial{\m{h}(\m{x}_{k})}}{\partial\m{x}_k}\m{\Phi}_{0}^{k}(\m{x}_0)\m{\delta}\m{x}_0,
\label{eq:var_measure}
		\end{align}
	\end{subequations}
	where $\m{\delta}\m{x}_{k+1} \in \mathbb{R}^{n_x}$ is the infinitesimal perturbation at time-index $k+1$ and $\m{\delta}\m{y}_{k} \in \mathbb{R}^{n_y}$ is the variational measurement at time-index $k$. The discrete-time variational mapping function is defined as $\m{\Phi}_{0}^{k}(\m{x}_0)
:= \left(\m{I}_{n_x}  +\tfrac{\partial\tilde{\m{f}}(\m{x}_{k})}{\partial\m{x}_{k}}\right)\tfrac{\partial\m{x}_{k}}{\partial\m{x}_{0}} \in \mathbb{R}^{n_x \times n_x}$ and matrix $\m{I}_{n_x}\in \mathbb{R}^{n_x \times n_x}$ is the identity matrix of size $n_x$. The variational measurement mapping function is denoted as $\m{\Psi}_{0}^{k}(\m{x}_{0}) :=  \tfrac{\partial{\m{h}(\m{x}_{k})}}{\partial\m{x}_k}\m{\Phi}_{0}^{k}(\m{x}_0)\in \mathbb{R}^{n_y\times n_x}$.
\end{myprs}
\begin{proof}
	Let $\hat{\m{x}}_0 = \m{x}_0 + \m{\delta}\m{x}_0 \in \mathcal{\m{X}}_0$, then for any discrete-time index $k>0$, the state-space equation~\eqref{eq:RKutta_dynamics_compact} can be written as $\hat{\m{x}}_{k+1} = \hat{\m{x}}_{k} + \tilde{\m{f}}(\hat{\m{x}}_{k} )$. Analogous to~\eqref{eq:VarNonLinDynamics}, the infinitesimal variational vector for index $k+1$ is rewritten as $\m{\delta}\m{x}_{k+1} = \m{\phi}_{0}^{k+1}(\hat{\m{x}}_0) - \m{\phi}_{0}^{k+1}(\m{x}_0) = \hat{\m{x}}_{k+1} - \m{x}_{k+1}.$ Applying the definition of Fréchet derivative~\cite[Definition 3.4.8]{Krantz2013}
	, i.e., directional derivative, and recalling that $\m{f}(\cdot)$ is at least twice differentiable, the following holds true: $\lim_{\m{\delta}\m{x}_0\rightarrow0}
	\frac{\hat{\m{x}}_{k+1} - \m{x}_{k+1}}{\hat{\m{x}}_0 -{\m{x}}_0 } =\frac{\hat{\m{x}}_{k+1} - \m{x}_{k+1}}{\m{x}_0 + \m{\delta}\m{x}_0-{\m{x}}_0}
	=\frac{\hat{\m{x}}_{k+1} - \m{x}_{k+1}}{\m{\delta}\m{x}_0} =\tfrac{\partial\m{x}_{k+1}}{\partial\m{x}_{0}}$. Similarly applying the Fréchet derivative to the nonlinear mapping function, we obtain: $\lim_{\m{\delta}\m{x}_0\rightarrow0}
	\frac{\tilde{\m{f}}( \hat{\m{x}}_{k}) - \tilde{\m{f}}(\m{x}_{k})}{\hat{\m{x}}_0 -{\m{x}}_0 } = \tfrac{\partial\tilde{\m{f}}(\m{x}_{k})}{\partial\m{x}_{0}}$; see Appendix~\ref{apdx:Runge-Kutta} 
for derivation of $\tfrac{\partial\tilde{\m{f}}(\m{x}_{k})}{\partial\m{x}_{0}}$. It follows that
	\begin{equation}
		\m{\delta}\m{x}_{k+1} = \underbrace{\hat{\m{x}}_{k} - \m{x}_{k}}_
		{\tfrac{\partial\m{x}_{k}}{\partial\m{x}_{0}}\m{\delta}\m{x}_0} +
		\underbrace{\tilde{\m{f}}( \hat{\m{x}}_{k}) - \tilde{\m{f}}(\m{x}_{k})}_
		{\tfrac{\partial\tilde{\m{f}}( \m{x}_{k})}{\partial\m{x}_{0}}\m{\delta}\m{x}_0},
	\end{equation}
	then by applying the chain rule, we can write
	$\tfrac{\partial\tilde{\m{f}}(\m{x}_{k})}{\partial\m{x}_{0}} =	\tfrac{\partial\tilde{\m{f}}(\m{x}_{k})}{\partial\m{x}_{k}}
	\tfrac{\partial\m{x}_{k}}{\partial\m{x}_{0}}$, the rest requires factoring out  $\tfrac{\partial\m{x}_{k}}{\partial\m{x}_{0}}$. Using an analogous approach, the variational measurement vector $\m{\delta}\m{y}_{k}$ for any $k>0$ can be written as 
\begin{equation}
		\m{\delta}\m{y}_{k} =
\hat{\m{y}}_{k} - \m{y}_{k} = \tfrac{\partial\m{y}_{k}}{\partial\m{x}_{0}}\m{\delta}\m{x}_k =
		\tfrac{\partial\m{h}(\m{x}_{k})}{\partial \m{x}_{k}}
		\tfrac{\partial \m{x}_{k}}{\partial\m{x}_0}\m{\delta}\m{x}_0, \end{equation}
	then, substituting the above with the variational state vector $\m{\delta}\m{x}_{k}$ we obtain~\eqref{eq:DiscVar}. This completes the proof.
\end{proof}

For ease of exposition, we write $ \m{x}_k \equiv \m{\phi}_{0}^{k} $ and treat the dependency on the initial condition $ \m{x}_0$ as implicit. That is, we drop the explicit argument $\m{x}_0$ in $ \m{\Phi}_{0}^{k}(\m{x}_0) $ and $ \m{\Psi}_{0}^{k}(\m{x}_0) $, denoting them as $ \m{\Phi}_{0}^{k} $ and $ \m{\Psi}_{0}^{k} $.

\begin{myrem}\label{rmk:Phi}
	The variational mapping function $\m{\Phi}_{0}^{k}$ requires the knowledge of states $\m{x}_i$  for $i \in \{0\;,1\;,\;\cdots\;,k\}$. Notice that this function can be written as $\m{\Phi}_{0}^{k} = \tfrac{\partial\m{\phi}_{0}^{k}}{\partial\m{\phi}_{0}^{k-1}}\tfrac{\partial\m{\phi}_{0}^{k-1}}{\partial\m{x}_{0}}$. Then, computing $\m{\Phi}_{0}^{k}$ requires evaluation under the chain rule and thus can be written for any discrete-time index $k$ as
	\begin{equation}\label{eq:Phi}
\m{\Phi}_{0}^{k}
		= \m{\Phi}_{k-1}^{k}  \m{\Phi}_{k-2}^{k-1}
		\; \cdots \; 	\m{\Phi}_{0}^{1}\m{\Phi}_{0}^{0}
= \prod^{i=k}_{1}\m{\Phi}^{i}_{i-1}
		,\hspace{-0.2cm}
	\end{equation}
	where matrix $\m{\Phi}_{i-1}^{i}= \tfrac{\partial\m{\phi}_{0}^{i}}{\partial\m{\phi}_{0}^{i-1}}$ represents the partial derivatives with respect to $\m{\phi}_{0}^{i-1}$. Note that $\m{\Phi}_{0}^{0} = \tfrac{\partial \m{x}_0}{\partial \m{x}_{0}} = \eye_{n_x}$; it is omitted for simplicity of exposition.
\end{myrem}
Based on variational system~\eqref{eq:DiscVar}, we now introduce the Var-Gram according to the following proposition.

\begin{myprs}\label{prs:proposed_observ_gramian}
	Consider a variational discrete-time system \eqref{eq:DiscVar} with measurement model. Following the results of Theorem~\ref{theo:bijection}, there exists a finite-time measurement horizon $\mr{N} \in \mathbb{N}$, such that the Var-Gram evaluated around initial state $\m{x}_{0} \in \mathcal{\m{X}}_0$ can be expressed as
\begin{equation}
		\boxed{\textit{Var-Gram:} \;\;\; \m{V}_{o}(\m{x}_{0}) := {\m{\Psi}(\m{x}_{0})}^{\top}\m{\Psi}(\m{x}_{0}) \in \mathbb{R}^{n_x \times n_x},}\label{eq:prop_obs_gram}
	\end{equation}
	where observability matrix $\m{\Psi}(\m{x}_{0}):= \m{\Psi} \in \mathbb{R}^{\mr{N}n_y \times n_x}$ concatenates the variational observations $\m{\delta}{\m{y}}_k$ over measurement horizon $\mr{N}$ for $k \in \{0,\;1,\;\cdots\;,\;\mr{N}-1\}$, and can be written as
	\begin{equation}\label{eq:prop_obs_matrix}
		\boxed{\m{\Psi} :=
				\bmat{ (\m{\Psi}_{0}^{0})^{\top}, (\m{\Psi}_{0}^{1})^{\top}, \ldots, (\m{\Psi}_{0}^{\mr{N}-1})^{\top} }^{\top}.} \vspace{+0.05cm}
	\end{equation}
\end{myprs}
Note that $\m{\Psi}_{0}^{k}$ is the variational measurement mapping function~\eqref{eq:var_measure}. For $k=0$, we have $\m{\Psi}_{0}^{0} = 	\tfrac{\partial{\m{h}(\m{x}_{0})}}{\partial\m{x}_{0}}\m{\Phi}_{0}^{0} = \tfrac{\partial{\m{h}(\m{x}_{0})}}{\partial\m{x}_{0}}$.
\begin{proof}
	Theorem~\ref{theo:bijection} implies the existence of an output sequence over a finite-time horizon, such that a unique solution around $\m{x}_0\in \mathcal{\m{X}}_0$ exists. With that in mind, we show that measurement vector~\eqref{eq:prop_obs_matrix} is equivalent to the Jacobian of the sequence $\m{\xi}(\m{x}_{0})$. This equivalence shows that Var-Gram~\eqref{eq:prop_obs_gram} represents the observability Gramian of the variational system. It follows from the definition of $\m{\Psi}_0^k$ in~\eqref{eq:var_measure} that, by applying the chain rule as discussed in Remark~\ref{rmk:Phi},~\eqref{eq:prop_obs_matrix} can be rewritten as
\begin{align} \label{prop2:eq1}
\m{\Psi} \hspace{-0.05cm}=\hspace{-0.05cm}
		\left\{\dfrac{\partial{\m{h}(\m{x}_{k})}}{\partial\m{x}_k}\m{\Phi}_{0}^{k}\right\}^{\mr{N}-1}_{k=0}
		\hspace{-0.05cm}=\hspace{-0.05cm}
		\left\{\dfrac{\partial{\m{h}(\m{x}_{k})}}{\partial\m{x}_k}
		\prod^{i=k}_{1}\m{\Phi}^{i}_{i-1}\right\}^{\mr{N}-1}_{k=0},
		\hspace{-0.2cm}
\end{align}
	we use column notation to express a matrix concatenated from iterated matrix entries for $k\in \{0\;,1\;,\;\cdots\;,\mr{N}-1\}$. Referring to Definition~\ref{def:uniformobs}, $\tfrac{\partial\m{\xi}(\m{x_0})}{\partial \m{x}_0}=\left\{\tfrac{\partial\m{h}\left(\m{\phi}_{0}^{k}(\m{x}_0)\right)}{\partial\m{\phi}_{0}^{k}(\m{x}_0)}\tfrac{\partial\m{\phi}_{0}^{k}(\m{x}_0)}{\partial \m{x}_0}\right\}^{\mr{N}-1}_{k=0}$; see~\eqref{eq:xi}.
The partial derivative $\tfrac{\partial\m{\phi}_{0}^{k}(\m{x}_0)}{\partial \m{x}_0}$ reduces to the Jacobian of the discrete-time state-equation~\eqref{eq:RKutta_dynamics_compact}. Under the action of the chain rule, it can be expressed as $\left(\m{I}_{n_x}  +\tfrac{\partial\tilde{\m{f}}(\m{x}_{k})}{\partial\m{x}_{k}}\right) \tfrac{\partial\m{\phi}_{0}^{k}(\m{x}_0)}{\partial \m{x}_0}$, which  is equal to $\m{\Phi}_{0}^{k}$. Thus, it follows that equations~\eqref{eq:prop_obs_matrix} and~\eqref{eq:UniformObs} are equivalent.
	This completes the proof.
\end{proof}
Based on Proposition~\ref{prs:proposed_observ_gramian}, we now show that the proposed Var-Gram~\eqref{eq:prop_obs_gram} for nonlinear systems reduces to the linear observability Gramian.
\begin{mycor}\label{cor:obs_equiv_lin}
	For any LTI system satisfying Assumption~\ref{assumption:compact}, the Var-Gram~\eqref{eq:prop_obs_gram} reduces to the linear observability Gramian $\m{W}_{o}^{l}({\m x}_0)$.
\end{mycor}
\begin{proof}
	For an LTI system with linear measurement mapping $\tfrac{\partial{\m{h}(\m{x}_{k})}}{\partial\m{x}_k} = \m{C}$ $\forall \; k \in \{0,1,\cdots,\mr{N}-1\}$ and under the action of the chain rule as discussed in Remark~\ref{rmk:Phi}, the variational measurement equation~\eqref{eq:var_measure} reduces to the following
	\begin{equation}\label{eq:cor1}
		\m{\Psi}^{k}_{0} = \dfrac{\partial{\m{h}(\m{x}_{k})}}{\partial\m{x}_k}\m{\Phi}^{k}_{0}= {\m{C}}\prod^{i=k}_{1}\m{\Phi}^{i}_{i-1},
	\end{equation}
	where $\m{\Phi}^{k}_{0}$ is equivalent to $\m{A}^{k}$ for all $k$, given that $\m{A}^{k}$ is invariant along the system trajectory. It follows that for an LTI system, the variational observability matrix reduces to the linear observability matrix as follows
	$$\m{\Psi}= \m{O}_{l} = \bmat{{\m{C}}^{\top},\;{\m{C}\m{A}}^{\top},\;\cdots\;,\;{\m{C}\m{A}^{\mr{N}-1}}^{\top}}^{\top}; \;\; \text{refer to~\eqref{eq:prop_obs_matrix}}.$$
	While noting that the multiplication of any matrix-valued vector with its transpose is equivalent to its matrix dot-product, the following holds true
	\begin{equation}\label{eq:cor2}
		\m{V}_{o}({\m x}_0) = \m{\Psi}^{\top}\m{\Psi}  =  \langle{\m{\Psi}},{\m{\Psi}}\rangle \equiv \langle{\m{O}}_{l},{\m{O}}_{l}\rangle =  \m{W}_{o}^{l}({\m x}_0).
	\end{equation}
	The proof is therefore complete.
\end{proof}

Having established the above relation to the linear observability Gramian, we now relate the introduced Var-Gram~\eqref{eq:prop_obs_gram} to the Empr-Gram~\eqref{eq:Emp0bsGram} for nonlinear dynamical systems.  
For succinctness in the exposition of the proof, the following theorem is considered under a linear output mapping $\m{h}(\m{x}_k) = \m{C}\m{x}_k$. The linear measurement model is suitable for analyzing sensors that directly measure the state variables at the nodes where they are placed. We note that this choice of measurement model does not restrict the generality of the proofs. The results thus extend to general nonlinear output functions; however, the full proof is beyond the scope of this paper. Readers are referred to Appendix~\ref{apndx:generaloutput} for a discussion.
\begin{theorem}\label{theo:Equivelence}
	Consider the discrete-time nonlinear mapping function $\tilde{\m{f}}(\m x_k)$ and $\m{h}(\m x_k) = \m{C}\m{x}_k$ under the condition of differentiability and Assumption~\ref{assumption:compact}. Then, the Empr-Gram~\eqref{eq:Emp0bsGram} is equivalent to the proposed Var-Gram~\eqref{eq:prop_obs_gram} for any initial condition $\m{x}_{0} \in \mc{\m{X}}_{0}$.
\end{theorem}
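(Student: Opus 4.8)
The plan is to prove the equivalence block‑by‑block in the time index $k$ and then sum over the finite horizon $\mr{N}$ guaranteed by Theorem~\ref{theo:bijection}: I will show that each summand $\tfrac{1}{4\eps^2}(\m{\Delta Y}_k^\eps)^\top\m{\Delta Y}_k^\eps$ of the Empr-Gram~\eqref{eq:Emp0bsGram} converges, as $\eps\to 0$, to the summand $(\m{\Psi}_0^k)^\top\m{\Psi}_0^k$ that assembles the Var-Gram~\eqref{eq:prop_obs_gram}. First I would unpack $\m{\Delta Y}_k^\eps$: with the linear output $\m{h}(\m{x}_k)=\m{C}\m{x}_k$, its $i$-th column is $\m{y}_k^{+i}-\m{y}_k^{-i} = \m{C}\big(\m{\phi}_0^k(\m{x}_0+\eps\m{e}_i)-\m{\phi}_0^k(\m{x}_0-\eps\m{e}_i)\big)$, so $\tfrac{1}{2\eps}\m{\Delta Y}_k^\eps$ is $\m{C}$ times the matrix whose columns are the symmetric difference quotients of $\m{\phi}_0^k$ along the standard basis directions. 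Since $\m{\phi}_0^k$ is a finite composition of the twice–differentiable IRK update $\m{x}\mapsto\m{x}+\tilde{\m{f}}(\m{x})$ (Appendix~\ref{apdx:Runge-Kutta}) and the trajectory remains in the compact set $\mc{X}$ by Assumption~\ref{assumption:compact}, each such quotient converges to the directional derivative $\tfrac{\partial\m{\phi}_0^k(\m{x}_0)}{\partial\m{x}_0}\m{e}_i$, and collecting columns gives $\tfrac{1}{2\eps}\m{\Delta Y}_k^\eps \to \m{C}\,\tfrac{\partial\m{\phi}_0^k(\m{x}_0)}{\partial\m{x}_0}$.

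Next I would identify this limit with $\m{\Psi}_0^k$. Using the nested chain rule from Remark~\ref{rmk:Phi}, $\tfrac{\partial\m{\phi}_0^k}{\partial\m{x}_0}=\prod_{i=1}^{k}\m{\Phi}_{i-1}^{i}=\m{\Phi}_0^k$, and since $\tfrac{\partial\m{h}(\m{x}_k)}{\partial\m{x}_k}=\m{C}$ for the linear output, definition~\eqref{eq:var_measure} of Proposition~\ref{prs:discrete_var} gives $\m{C}\,\tfrac{\partial\m{\phi}_0^k}{\partial\m{x}_0}=\m{C}\m{\Phi}_0^k=\m{\Psi}_0^k$; hence $\tfrac{1}{2\eps}\m{\Delta Y}_k^\eps\to\m{\Psi}_0^k$. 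This is essentially the same Fréchet‑derivative step already used in the proof of Proposition~\ref{prs:discrete_var}, now applied with the perturbation taken along each basis vector $\m{e}_i$.

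Then I would assemble and pass to the limit. Rewriting $\m{W}_o^\eps(\m{x}_0)=\sum_{k=0}^{\mr{N}-1}\big(\tfrac{1}{2\eps}\m{\Delta Y}_k^\eps\big)^\top\big(\tfrac{1}{2\eps}\m{\Delta Y}_k^\eps\big)$ and using continuity of matrix multiplication together with the finiteness of the horizon $\mr{N}$, I obtain $\lim_{\eps\to 0}\m{W}_o^\eps(\m{x}_0)=\sum_{k=0}^{\mr{N}-1}(\m{\Psi}_0^k)^\top\m{\Psi}_0^k=\m{\Psi}^\top\m{\Psi}=\m{V}_o(\m{x}_0)$ for every $\m{x}_0\in\mc{X}_0$, which is the claimed equivalence. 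To make the phrase ``for any sufficiently small $\eps$'' quantitative I would carry one more order of the Taylor expansion: the symmetric difference cancels the quadratic term, so $\m{\Delta Y}_k^\eps=2\eps\,\m{\Psi}_0^k+O(\eps^3)$ uniformly on $\mc{X}$, which yields $\m{W}_o^\eps(\m{x}_0)=\m{V}_o(\m{x}_0)+O(\eps^2)$.

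I do not anticipate a deep obstacle—the core is a symmetric difference‑quotient/Taylor computation—but two points need care. The first is the bookkeeping that matches the column structure of $\m{\Delta Y}_k^\eps$ (columns indexed by the perturbation direction $\m{e}_i$) with that of $\m{\Psi}_0^k$ (columns indexed by the state component), together with the collapse of the product of block Jacobians into $\m{\Phi}_0^k$ via Remark~\ref{rmk:Phi}. The second is ensuring the convergence is uniform in the quantities that are squared and divided by $\eps^2$, which is precisely where Assumption~\ref{assumption:compact} and the assumed differentiability of $\m{f}$, $\m{h}$, and the IRK map are invoked. Finally, the restriction to $\m{h}(\m{x}_k)=\m{C}\m{x}_k$ is only for exposition; extending to a general nonlinear output is routed through Lemma~\ref{lemma:LyapProp} and Appendix~\ref{apndx:generaloutput}, as remarked before the theorem statement.
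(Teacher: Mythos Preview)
Your proposal is correct and follows essentially the same route as the paper: rewrite each column of $\tfrac{1}{2\eps}\m{\Delta Y}_k^\eps$ as a central difference quotient that converges to $\tfrac{\partial\m{y}_k}{\partial\m{x}_0}=\m{C}\tfrac{\partial\m{\phi}_0^k}{\partial\m{x}_0}$, then identify this Jacobian with $\m{\Psi}_0^k=\m{C}\m{\Phi}_0^k$ via the chain-rule decomposition of Remark~\ref{rmk:Phi}, and finally sum over the finite horizon. Your treatment is in fact more explicit than the paper's (which simply cites the central-difference identity and references~\cite{Powel2015,Kawano2021}), in that you invoke Assumption~\ref{assumption:compact} to justify uniform convergence and carry the Taylor expansion one order further to obtain the quantitative $O(\eps^2)$ bound.
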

\begin{proof}
	See Appendix~\ref{apdx:proof_theo2}.
\end{proof}

Theorem~\ref{theo:Equivelence} demonstrates that the Var-Gram, based on the infinitesimal variational system~\eqref{eq:DiscVar}, is a model-based formulation equivalent to the impulse response-based Empr-Gram in depicting local observability.
It is worthwhile to note that the Empr-Gram, although computed from simple algebraic equations, is considered computationally expensive~\cite{Haber2018}, requiring $2n_x$ impulse response simulations of~\eqref{eq:model_DT} to construct~\cite{Mesbahi2021, Kawano2021}.
Alternatively, the Var-Gram, which relies on variational dynamics, is computed along one local trajectory for any $\m{x}_0 \in \mathcal{X}_0$. A comparison of computational effort is illustrated in Section~\ref{sec:casestudies}.
It is also important to note that the Var-Gram, which is directly computed from the variational dynamics, depicts local variations along the system trajectory and inherently scales the internal state variables to account for such variations. Nevertheless, this is not true for the Empr-Gram, which requires studying the state variables to properly scale internal states relative to the size of the eigenvalues of the Empr-Gram~\cite{Krener2009}. We note here that the Var-Gram requires that the nonlinear mapping functions $\m{f}(\cdot)$ and $\m{h}(\cdot)$ to be smooth and at least twice differentiable, while the Empr-Gram requires the system to be only numerically integrable.
\vspace{-0.2cm}
\section{Observability Conditions \& LEs}\vspace{-0.1cm}\label{sec:LyapExp}
There exist several observability measures that can be defined based on the rank, smallest eigenvalue, condition number, trace, and determinant of an observability Gramian---see \cite{Pasqualetti2014,Summers2016} and references therein. In this section, we show evidence of a connections between an observability measure related to the Var-Gram and LEs. 

LEs measure the exponential rate of convergence and divergence of nearby orbits of an attractor in the state-space~\cite{Pikovsky2017}.
The exponents provide a characteristic spectrum that offers a basis for stability notions introduced in Lyapunov's work on the problem of stability of motions~\cite{AleksandrMikhailovichLyapunov1892}. The maximal LE (MLE) indicates the exponential divergence or convergence about specific trajectory.  In such context, $\m \delta{\m{x}_0}$ is considered an infinitesimal perturbation $\eps>0$ to initial conditions $\m{x}_0$ and its exponential decay or growth is $\m{\delta}\m{x}_k$. Finite-time LEs quantify the evolution of such infinitesimal perturbations over a fixed finite-time horizon~\cite{Krishna2023}, and their computation is well-established in the literature~\cite{Krishna2023,Pikovsky2017}. For nonlinear systems, we compute LEs for horizon $\mr{N}$, using the infinitesimal discrete-time variational representation~\eqref{eq:DiscVar} of the nonlinear dynamics in~\eqref{eq:flow_CT}; see~\cite{Shadden2005a} for a complete derivation of~\eqref{eq:DiscVar} in the field of chaos and ergodicity.

Given a discrete-time flow map for a given trajectory $\m{\phi}^{k}_{0} \in \mathcal{\m{X}}$ starting from initial point $\m{x}_{0} \in \mathcal{\m{X}}_0$, the finite-time maximal LE can be computed, over a finite horizon $k = \mr{N}$, as follows
\begin{equation}\label{eq:LyapExp2}
	\hspace{-0.1cm}\m{\lambda}_{\small \mr{MLE}} \hspace{-0.05cm}:= \hspace{-0.05cm}
	\frac{1}{{k}}\log \left( \dfrac{\norm{\m{\delta}\m{x}_{k}}}{\norm{\m{\delta}\m{x}_0}}\right)
	\hspace{-0.1cm} =\hspace{-0.1cm}
	\frac{1}{2k}\log \left( \norm{\m{\Phi}_{0}^{k}}^{2}\right)\hspace{-0.1cm},
\end{equation}
where the norms $\norm{\m{\delta}\m{x}_k}$ and $\norm{\m{\delta}\m{x}_0}$ represent the average magnitude of the perturbation on $\m{x}_{k}$ and $\m{x}_0$. The induced norm $\norm{\m{\Phi}_{0}^{k}}^{2} := \lambda_{\mr{max}}({\m{\Phi}_{0}^{k}}^{\top} \m{\Phi}_{0}^{k}) $ represents the deformation of an infinitesimal volume, where ${\m{\Phi}_{0}^{k}}^{\top} \m{\Phi}_{0}^{k}$ is the Cauchy-Green deformation matrix---see~\cite{Nolan2020,Krishna2023}. The spectrum of finite-time LEs, for finite-time observation horizon $k=\mr{N}$, is then computed by taking the eigenvalues of the following matrix
\begin{equation}\label{eq:LyapExp3}
	\m{\Lambda}_{L}(\m{x}_0) :=
	\frac{1}{2k}\log \left( {\m{\Phi}_{0}^{k}}^{\top}\m{\Phi}_{0}^{k}\right)\hspace{-0.1cm},
\end{equation}
\noindent where $\m{\lambda}_{L}(\m{x}_0):= \m{\lambda}_{L}= \diag{\lambda(\m{\Lambda}_{L}(\m{x}_0) )} \in \mathbb{R}^{n_x \times n_x} $ is a diagonal matrix of finite-time LEs. The above method for computing the LEs is termed the Oseledec approach; refer to~\cite[Section 2]{Ershov1998}.
\begin{myrem}\label{rmk:regular}
	The regularity of the dynamical system~\eqref{eq:model_DT} is a mild condition. The existence of the full spectrum of LEs is well-established as a result of the multiplicative ergodic theorem (Oseledets Theorem); see~\cite[Section 10.1]{Barreira1998}. 
\end{myrem}

It follows that for discrete-time systems, Oseledec’s splitting fully decomposes Lyapunov vectors and therefore guarantees the existence of a full Lyapunov spectrum of exponents~\cite{Manneville1990, Barreira1998,Frank2018,Tranninger2020,Martini2022}.
This result eliminates the technical challenges of computing the LEs while requiring the verification of system regularity, defined by the existence and smoothness of variational mapping functions $\m{\Phi_{0}^{k}}$.

Before establishing the relation between LEs and an observability measure based on the Var-Gram, we note that LEs are invariant under nonzero scalar scaling and satisfy a max-type inequality under superpositions (where the LEs of a superposed perturbation is dominated by the largest growth rate among its components); see~\cite[Theorem~2.1.2]{Barreira1998}. These properties apply along variational system trajectories given by $\m{\delta x}_k = \m{\Phi}_0^k(\m x_0)\m{\delta x}_0$ for arbitrary $\m{\delta x}_0 \in \mathbb{R}^{n_x}$. As such, the LEs measuring perturbation growth along the variational system trajectory induced by $\m{\Phi}_0^k$ are invariant under nonzero scalar scaling of the perturbation direction and are unaffected by measurement model scaling.

In the following, we illustrate that the $\log\det$ of the proposed Var-Gram~\eqref{eq:prop_obs_gram} is related to the system's LEs~\eqref{eq:LyapExp3}. The following theorem considers a linear measurement mapping function given by $\m{h}(\m{x}_k) = \m{C}\m{x}_k$ that directly measures all state variables, i.e., $\m{C} = \m{I}_{n_x}$.
\begin{theorem}\label{theo:logdet}
Let the Var-Gram be defined as~\eqref{eq:prop_obs_gram} and the LEs be defined as~\eqref{eq:LyapExp3}. The $\log\det$ of the Var-Gram, for $\m{h}(\m x_k) = \m{C}\m{x}_k$, has an underlying connection to the Lyapunov spectrum of exponents according to the following
\begin{equation}\label{eq:Lyap_connection}
\log\det(\m{V}_{o}({\m x}_0)) \equiv \alpha \sum_{i=1}^{n_x}\lambda_{L,i},
\end{equation}
where $\lambda_{L,i}$ are the LEs, i.e., the eigenvalues of diagonal matrix $\m{\lambda}_{L}$, and the constant $\alpha = 2\mr{N}$.
\end{theorem}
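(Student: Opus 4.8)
The plan is to turn $\log\det\m{V}_o$ into the exponential rate at which $\m{\Phi}_0^{\mr{N}}$ deforms an infinitesimal volume along the trajectory, and then to recognize that rate as the sum of the finite-time LEs. The bridge between the two sides is the Lyapunov-exponent calculus of Lemma~\ref{lemma:LyapProp} together with the regularity granted by Remark~\ref{rmk:regular}.

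First I would substitute $\m{h}(\m{x}_k)=\m{C}\m{x}_k$ into~\eqref{eq:var_measure}, so that $\m{\Psi}_0^k=\m{C}\m{\Phi}_0^k$ and, using $\m{\Phi}_0^0=\eye_{n_x}$, the Var-Gram takes the Cauchy--Green form
\begin{equation*}
	\m{V}_o(\m{x}_0)=\m{\Psi}^\top\m{\Psi}=\sum_{k=0}^{\mr{N}-1}{\m{\Phi}_0^k}^\top\m{C}^\top\m{C}\,\m{\Phi}_0^k .
\end{equation*}
On the other side, since the $\lambda_{L,i}$ are the eigenvalues of $\m{\Lambda}_L=\tfrac{1}{2\mr{N}}\log({\m{\Phi}_0^{\mr{N}}}^\top\m{\Phi}_0^{\mr{N}})$, taking the trace gives $\sum_{i=1}^{n_x}\lambda_{L,i}=\trace\m{\Lambda}_L=\tfrac{1}{2\mr{N}}\log\det({\m{\Phi}_0^{\mr{N}}}^\top\m{\Phi}_0^{\mr{N}})=\tfrac{1}{\mr{N}}\log|\det\m{\Phi}_0^{\mr{N}}|$, i.e. the sum of the LEs is exactly the finite-time averaged exponential rate of infinitesimal-volume deformation. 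Hence $\alpha\sum_i\lambda_{L,i}=2\mr{N}\cdot\tfrac{1}{\mr{N}}\log|\det\m{\Phi}_0^{\mr{N}}|=\log\det({\m{\Phi}_0^{\mr{N}}}^\top\m{\Phi}_0^{\mr{N}})$, so the claim reduces to showing that $\log\det\m{V}_o$ and $\log\det({\m{\Phi}_0^{\mr{N}}}^\top\m{\Phi}_0^{\mr{N}})$ agree to leading order in $\mr{N}$, with the factor $\alpha=2\mr{N}$ coming from the quadratic form $\m{\Psi}^\top\m{\Psi}$ (the $2$) and the horizon length (the $\mr{N}$).

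The core step is to pin down the growth rate of the sum. Each summand is positive semidefinite, so $\m{V}_o$ dominates any single term in the Loewner order; when $\m{C}$ has full column rank this already yields $\tfrac{1}{2\mr{N}}\log\det\m{V}_o\ge\tfrac{1}{2\mr{N}}\log\det({\m{\Phi}_0^{\mr{N}-1}}^\top\m{C}^\top\m{C}\,\m{\Phi}_0^{\mr{N}-1})\to\sum_i\lambda_{L,i}$, and in general uniform observability (Definition~\ref{def:uniformobs}, Theorem~\ref{theo:bijection}) is used to recover the same lower bound by accumulating the contributions of $\m{C}\m{\Phi}_0^k$ over $k=0,\dots,\mr{N}-1$. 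For the matching upper bound I would apply property (L2.2) to $\m{V}_o=\sum_k(\cdot)$, so that the Lyapunov exponent of $\m{V}_o$ is dominated by that of the terminal-time term, and property (L2.1) to strip the constant factor $\m{C}^\top\m{C}$ from that exponent; combined with the regularity of~\eqref{eq:model_DT} (Remark~\ref{rmk:regular}, i.e. Oseledets' multiplicative ergodic theorem, which guarantees the full spectrum and the separation of the Oseledets subspaces) this gives $\tfrac{1}{2\mr{N}}\log\det\m{V}_o\to\sum_{i=1}^{n_x}\lambda_{L,i}$, which is precisely $\log\det\m{V}_o\equiv 2\mr{N}\sum_i\lambda_{L,i}$.

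I expect the main obstacle to be exactly the step that discards the non-terminal summands together with the factor $\m{C}^\top\m{C}$ without a rank collapse: when $\m{C}$ has fewer rows than $n_x$, no single term ${\m{\Phi}_0^k}^\top\m{C}^\top\m{C}\,\m{\Phi}_0^k$ is invertible, so one must argue that accumulating over the horizon $\mr{N}$ restores the missing directions at the correct exponential rates, which is where uniform observability and the separation of Oseledets subspaces do the real work. This is also the reason the statement is asymptotic (``$\equiv$'') rather than an exact equality: the discrepancy is an $O(1)$ contribution from $\m{C}^\top\m{C}$ and from the finite geometric-series tails $\sum_k e^{2k\lambda_{L,i}}$, which is negligible against $2\mr{N}\sum_i\lambda_{L,i}$ whenever the system is not volume-preserving along the trajectory.
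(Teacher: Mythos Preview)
Your approach is essentially the same as the paper's: write $\m{V}_o=\sum_{k}{\m{\Phi}_0^k}^\top\m{C}^\top\m{C}\,\m{\Phi}_0^k$, use the Lyapunov-exponent calculus (specifically property (L2.2) of Lemma~\ref{lemma:LyapProp}) to reduce the sum to the terminal Cauchy--Green term ${\m{\Phi}_0^{\mr{N}-1}}^\top\m{\Phi}_0^{\mr{N}-1}$, and then identify $\log\det$ of that with $2\mr{N}\sum_i\lambda_{L,i}$ via $\trace\m{\Lambda}_L$. The one difference is that the paper does not engage with the rank-deficient-$\m{C}$ obstacle you flag in your last paragraph: it simply assumes the full-sensing case $\m{C}^\top\m{C}=\eye_{n_x}$ so that $\m{C}$ drops out entirely before the (L2.2) step, and defers the general output case to Appendix~\ref{apndx:generaloutput} where (L2.1) is invoked to argue that a positive-definite $\tilde{\m{C}}(\m{x}_k)=\m{C}(\m{x}_k)^\top\m{C}(\m{x}_k)$ only rescales the exponents without changing their signs. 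Your careful upper/lower-bound argument via Loewner ordering and uniform observability is more explicit than what the paper writes, but it is addressing a generality the paper's proof does not attempt.
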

\begin{proof}
For the proof of Theorem~\ref{theo:logdet}, see Appendix~\ref{apdx:proof_theo2}.
\end{proof}

Having provided the above relation between the proposed Var-Gram measure and LEs, the following theorem illustrates a local observability condition for discrete-time nonlinear systems without inputs.

\begin{theorem}\label{theo:spectral-radius}
For any discrete-time nonlinear system~\eqref{eq:model_DT} satisfying Assumption~\ref{assumption:compact} and regularity, the system is uniformly observable around $\m{x}_0 \in \mathcal{\m{X}}_0$ if, for a finite-time horizon $k=\mr{N} \in \mathbb{N}$ and for $\m V_{o}(\m{x}_0)\succeq0$ it holds that
\begin{equation}\label{eq:spectral}
\boxed{\frac{1}{2k} \rho(\m{V}_o) = \frac{1}{2k} \lambda_{\max}\!\big(\m V_o(\m x_0)\big) < 1,}
\end{equation}
where $\rho(\m V_{o})\hspace{-0.1cm}:=\hspace{-0.1cm}\rho\big(\m V_o(\m x_0)\big)$ is the spectral radius of the Var-Gram.
\end{theorem}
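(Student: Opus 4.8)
The plan is to convert the spectral-radius condition \eqref{eq:spectral} into the full-rank Jacobian condition of Theorem~\ref{theo:bijection}, routing through the $\log\det$ identity of Theorem~\ref{theo:logdet}. First I would record the reduction already in hand: by Proposition~\ref{prs:proposed_observ_gramian} the variational observability matrix $\m{\Psi}$ of \eqref{eq:prop_obs_matrix} equals the Jacobian $\tfrac{\partial\m{\xi}(\m{x}_0)}{\partial\m{x}_0}$ of the finite output sequence \eqref{eq:xi}, so $\m{V}_{o}(\m{x}_0)=\m{\Psi}^{\top}\m{\Psi}\succeq 0$ always, and $\mathrm{rank}\!\left(\tfrac{\partial\m{\xi}(\m{x}_0)}{\partial\m{x}_0}\right)=n_x$ if and only if $\m{V}_{o}(\m{x}_0)\succ 0$, equivalently $\lambda_{\min}(\m{V}_{o})>0$, equivalently $\det(\m{V}_{o})\neq 0$. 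By Theorem~\ref{theo:bijection} (whose distinguishability hypothesis I carry as a standing assumption), establishing $\m{V}_{o}(\m{x}_0)\succ 0$ at the finite horizon $\mr{N}$ yields uniform observability around $\m{x}_0$. So the task is precisely: show that \eqref{eq:spectral} implies $\m{V}_{o}(\m{x}_0)\succ 0$.

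Second I would bring in Theorem~\ref{theo:logdet}. Since $\m{h}(\m{x}_k)=\m{C}\m{x}_k$, we have $\log\det(\m{V}_{o}(\m{x}_0))=2\mr{N}\sum_{i=1}^{n_x}\lambda_{L,i}$, with $\lambda_{L,i}$ the eigenvalues of $\m{\Lambda}_{L}(\m{x}_0)$ in \eqref{eq:LyapExp3}. Under Assumption~\ref{assumption:compact} the trajectory stays in a compact set, so the variational maps $\m{\Phi}_{0}^{k}$ and the products $\prod^{i=k}_{1}\m{\Phi}^{i}_{i-1}$ have bounded singular values on $\{0,\dots,\mr{N}-1\}$; with the regularity hypothesis (Remark~\ref{rmk:regular}, Oseledets' multiplicative ergodic theorem) the full Lyapunov spectrum exists and every $\lambda_{L,i}$ is finite. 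Hence $\sum_i\lambda_{L,i}$ is finite, so $\log\det(\m{V}_{o})>-\infty$, i.e.\ $\det(\m{V}_{o})>0$; a positive semidefinite matrix with nonzero determinant is positive definite, so $\m{V}_{o}(\m{x}_0)\succ 0$. For the quantitative form matching \eqref{eq:spectral} I would also use $\det(\m{V}_{o})=\prod_i\lambda_i(\m{V}_{o})$ together with the elementary bound $\lambda_{\min}(\m{V}_{o})\ge \det(\m{V}_{o})/\rho(\m{V}_{o})^{n_x-1}$, valid since $\m{V}_{o}\succeq 0$ and $\rho(\m{V}_{o})=\lambda_{\max}(\m{V}_{o})$.

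Third I would invoke \eqref{eq:spectral} itself, read in the finite-time Lyapunov-scaled sense of \eqref{eq:LyapExp2}: the condition $\rho(\m{V}_{o})<1$ provides a uniform upper bound on the maximal-growth eigenvalue. Via the Cauchy--Green identity $\|\m{\Phi}_{0}^{k}\|^{2}=\lambda_{\max}\!\big({\m{\Phi}_{0}^{k}}^{\top}\m{\Phi}_{0}^{k}\big)$ and properties (L2.1) and (L2.2) of Lemma~\ref{lemma:LyapProp}, left-multiplying by the constant map $\m{C}$ does not alter the exponents, so this bound transfers from $\m{\Phi}_{0}^{k}$ to $\m{\Psi}$ and thus bounds $\rho(\m{V}_{o})$. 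Substituting this bound and $\det(\m{V}_{o})>0$ into the displayed eigenvalue inequality gives $\lambda_{\min}(\m{V}_{o})>0$, hence $\m{V}_{o}(\m{x}_0)\succ 0$, hence $\mathrm{rank}\!\left(\tfrac{\partial\m{\xi}(\m{x}_0)}{\partial\m{x}_0}\right)=n_x$, and Theorem~\ref{theo:bijection} delivers uniform observability around $\m{x}_0$.

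The hard part, I expect, is the middle bridge. The eigenvalues of $\m{V}_{o}=\m{\Psi}^{\top}\m{\Psi}$ are attached to the horizon-stacked matrix $\m{\Psi}$, whose blocks are $\tfrac{\partial\m{h}(\m{x}_k)}{\partial\m{x}_k}\m{\Phi}_{0}^{k}$, whereas the LEs live on the Cauchy--Green tensor ${\m{\Phi}_{0}^{k}}^{\top}\m{\Phi}_{0}^{k}$ of the state variational flow alone. Making rigorous that the stacking over $k$ and the left multiplication by $\m{C}$ (and, for a nonconstant $\tfrac{\partial\m{h}}{\partial\m{x}_k}$, in the general case) change neither the finiteness of the spectrum nor the sign pattern that governs rank is exactly where Lemma~\ref{lemma:LyapProp} and the compactness of Assumption~\ref{assumption:compact} must be used. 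A related delicate point is pinning down the $\sup\lim_{k\to\mr{N}}\tfrac{1}{2k}(\cdot)$ normalization in \eqref{eq:spectral} so that ``$<1$'' is genuinely the operative bound on $\rho(\m{V}_{o})$, and isolating the degenerate endpoint where some $\lambda_i(\m{V}_{o})\downarrow 0$ --- an LE diverging to $-\infty$ --- which is exactly what regularity rules out.
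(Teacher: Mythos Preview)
Your route differs from the paper's. The paper argues via the \emph{sign} of the Lyapunov exponents: starting from Theorem~\ref{theo:logdet}, it observes that a positive exponent $\lambda_{L,i}>0$ forces the iterated variational products $\m{\Phi}_{0}^{k}$ to diverge unboundedly (invoking an external result of Barabanov on the intersection $\m{\Phi}_{0}^{k}\cap\m{\Phi}_{0}^{1}$), which in turn produces rank deficiency in $\m{V}_o$ and loss of information along the trajectory. The conclusion is that the exponents must all be negative, and via the identification $\log\lambda_i=\lambda_{L,i}$ from Theorem~\ref{theo:logdet} this is exactly $\rho(\m{V}_o)<1$. Your argument instead pivots on \emph{finiteness} of the exponents: regularity $\Rightarrow$ each $\lambda_{L,i}$ finite $\Rightarrow$ $\log\det(\m{V}_o)=2\mr{N}\sum_i\lambda_{L,i}$ finite $\Rightarrow$ $\det(\m{V}_o)>0$ $\Rightarrow$ $\m{V}_o\succ 0$.

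The gap is that your chain never uses the hypothesis $\rho(\m{V}_o)<1$. If ``regularity $\Rightarrow$ finite LEs $\Rightarrow$ $\det(\m{V}_o)>0$'' were the whole story, uniform observability would follow from Assumption~\ref{assumption:compact} and regularity alone, and the spectral-radius condition in \eqref{eq:spectral} would be vacuous. Your third paragraph reintroduces $\rho(\m{V}_o)<1$ only to feed the bound $\lambda_{\min}(\m{V}_o)\ge \det(\m{V}_o)/\rho(\m{V}_o)^{n_x-1}$, but that step is redundant once you already have $\det(\m{V}_o)>0$ and $\m{V}_o\succeq 0$. The paper's mechanism is different and makes the hypothesis load-bearing: it is the \emph{negativity} of the exponents (equivalently $\rho<1$), not merely their finiteness, that rules out the divergence/rank-deficiency phenomenon it cites. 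To align with the paper you would need to replace your ``finite $\Rightarrow$ $\det>0$'' step with an argument that ties the sign of the exponents---and hence the condition $\rho(\m{V}_o)<1$ itself---to the boundedness and non-degeneracy of $\m{V}_o$ over the horizon.
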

\begin{proof}\label{proof:spect}
The proof follows from Theorem~\ref{theo:logdet}, where the Gramian~\eqref{eq:prop_obs_gram} is shown to be equivalent to the LEs according to~\eqref{eq:Lyap_connection}. The exponents denote the exponential asymptotic stability around an ellipsoid $\m\delta\m{x}_0$ along the trajectory. For uniform observability to hold, the eigenvalues of $\m{V}_{o}$ must remain bounded and non-degenerate over time. Now, noting that if the LEs are positive $\lambda_{L,i}>0$, the vectors formed from the intersection of $\m{\Phi}_{0}^{k} \cap \m{\Phi}_{0}^{1} = 0$; refer to~\cite[Th.2.4]{Barabanov2005}. Note that $\m{\Phi}_{0}^{k}$ is iteratively calculated based on $\m{\Phi}_{0}^{1}$ (Remark~\ref{rmk:Phi}), it follows that the Var-Gram computation results in unbounded divergence~\cite[Th.2.4]{Barabanov2005}. This indicates rank deficiency in $\m{V}_o$ and thus a loss of information along the trajectory from time index $0$ to $k$. Based on the equivalence relation provided in Theorem~\ref{theo:logdet}, the LEs are to be negative. It follows that the maximal eigenvalue must be less than one, i.e., the spectral radius satisfies $\tfrac{1}{2\mr{N}}\rho(\m{V}_{o}) < 1$. This concludes the proof.
\end{proof}

We note that the above condition addresses uniform observability (Definition~\ref{def:uniformobs}). The following corollary provides necessary and sufficient conditions for the local observability of nonlinear system~\eqref{eq:model_DT} represented in discrete-time variational form~\eqref{eq:DiscVar}.
\begin{mycor}\label{cor:JointSpect}
Let $\m{V}_{o}(\cdot) = {\m{\Psi}(\m{x}_{0})}^{\top}\m{\Psi}(\m{x}_{0})$ denote the Var-Gram computed for the variational system~\eqref{eq:DiscVar}. The spectral radius satisfies
$\limsup_{k\to\infty} ||\m{\Psi}_{0}^{k}||^{\tfrac{1}{k}} \leq 1$
 if and only if there exists a constant $K$ such that $||\m{\Psi}_{k-1}^{k}\;
\m{\Psi}_{k-2}^{k-1}\;
\cdots 	\;\m{\Psi}_{0}^{1}|| \leq K\; \forall\; k\in\mathbb N$.
\end{mycor}\begin{proof}\label{proof:IFF}
The equivalence between the boundedness of a matrix product norm and the spectral radius condition is established in~\cite{Rota1960}. The proof follows from the subadditivity property and the equivalence of norms as presented in~\cite[Lemma 1]{Rota1960}.
\end{proof}	

Theorem~\ref{theo:logdet} shows that the $\log\det$ of the var-Gram is equivalent to computing the LEs of a system. The latter being computationally tractable given that it can be computed through data-driven approaches~\cite{Pikovsky2017}. This equivalence provides data-driven prospects for observability quantification. Theorem~\ref{theo:spectral-radius} provides a uniform observability condition around $\m{x}_0$ for assessing the observability of discrete-time nonlinear systems. Such observability condition can be used in the context of sensor selection and state-estimation. 

The condition presented in Corollary~\ref{cor:JointSpect} holds true for discrete nonlinear systems under Assumption~\ref{assumption:compact} and with the regularity of the system~\eqref{eq:DiscVar}; see Remark~\ref{rmk:regular}. The condition of regularity established by the multiplicative ergodic theorem ensures the boundedness of $\m{\Phi}_{0}^{k}$ and $\m{\Psi}_{0}^{k} \;\; \forall \;\; k \; \in\; \{0,\;1,\cdots,\;\mr{N}-1\}$ and thus the existence of a finite joint spectral radius (Corollary~\ref{cor:JointSpect}). The spectral radius limit in turn presents necessary conditions for local uniform observability (Definition~\ref{def:uniformobs}).
In the subsequent section, we introduce the observability-based SNS framework.

\vspace{-0.4cm}
\section{Application of Observability-Based SNS in Nonlinear Networks}\vspace{-0.1cm}\label{sec:SNS1}
The interdependence between internal state variables allows for the reconstruction of state variables by measuring a subset of the measured outputs. This formulates the basis of observability-based SNS for dynamical systems. While myriad methods exist for addressing the SNS problem in linear systems, approaches for nonlinear systems are less developed. One approach is to pose the problem as a constrained set maximization problem~\cite{Summers2016}; see also~\cite{Joshi2009,Manohar2022} and references therein for alternative methods.

To that end, we define the \textit{set function} $\mathcal{O}{(\mathcal{S})}: 2^{\mathcal{V}}\rightarrow \mbb{R}$ with $\mathcal{V} := \{ i\in\mbb{N}\,|\,0 < i \leq n_y\}$. The set $\mathcal{V}$ denotes all the possible sets of sensor locations combinations and set $\mathcal{S}$ represents a set of sensor combinations such that $\mathcal{S}\subseteq \mathcal{V}$. The SNS set optimization problem can be written as
\begin{equation}\hspace*{-0.3cm}\label{eq:OSP}
{(\mathbf{P1})}\;\;	\mathcal{O}^*({\mathcal{S}}) := \maximize_{\mathcal{S}\subseteq\mathcal{V}}\;\; \mathcal{O}(\mathcal{S}),\;\; \subjectto\; \; \abs{\mathcal{S}} = r.\hspace{-0.2cm}
\end{equation}

In the context of applications to optimal SNS, solving $\mathbf{P1}$ refers to finding the best sensor configuration $\mathcal{S}$ containing $r$ sensors, such that an observability-based measure $\mathcal{O}(\mathcal{S})$ is maximized. The rationale for posing the SNS problem as a set optimization problem is due to the modular and submodular properties of the observability measures, which enable scalable solutions. The following are definitions of modular and submodular set functions.
\begin{mydef}[modularity~\cite{Lovasz1983}]\label{def:modularity}
\label{def:modular_submodular}
A set function $\mathcal{O}: 2^{\mathcal{V}}\rightarrow \mbb{R}$ is said to be modular if and only if for any $\mathcal{S}\subseteq\mathcal{V}$ and weight function $w:\mathcal{V}\rightarrow \mbb{R}$, it holds that $\mathcal{O}(\mathcal{S}) = w(\emptyset) + \sum_{s\in\mathcal{S}} w(s)$. 
\end{mydef}
\begin{mydef}[submodularity~\cite{Lovasz1983}]\label{def:sub}
A set function $\mathcal{O}: 2^{\mathcal{V}}\rightarrow \mbb{R}$ is said to be submodular if and only if for any $\mathcal{A},\mathcal{B}\subseteq\mathcal{V}$ given that $\mathcal{A}\subseteq\mathcal{B}$, it holds that for all $s\notin\mathcal{B}$
\begin{align}
\mathcal{O}(\mathcal{A}\cup\{s\}) - \mathcal{O}(\mathcal{A})\geq 	\mathcal{O}(\mathcal{B}\cup\{s\}) -  \mathcal{O}(\mathcal{B}). \label{eq:submodular_def}
\end{align}
\end{mydef}

We note that for the observability measure function $\mc{O}(\mc{S}) = \log\det(\cdot)$, the SNS problem $\mathbf{P1}$ based on the proposed Var-Gram is submodular and monotone increasing\footnote{Let $\mc{O}: 2^{\mathcal{V}}\rightarrow \mbb{R}$ denote a set function. For any $\mathcal{A},\mathcal{B}\subseteq\mathcal{V}$, the set function is monotone increasing if, for $\mathcal{A}\subseteq\mathcal{B}$ the following is true, $	\mc{O}(\mathcal{B})\geq \mc{O}(\mathcal{A})$.}. This is analogous to the case for linear Gramians, as shown in~\cite{Summers2016}. Note that while several observability measures exhibit modular or submodular properties, such as the $\mr{trace}$ and $\mr{rank}$, other measures, including the condition number and the $\mr{trace}$ of the inverse Gramian, do not exhibit submodular properties. Recall that the variational form~\eqref{eq:DiscVar} of a discrete-time system represents the system flow along the tangent space which is a linear space. We choose the observability measure ($\log\det$) due to its underlying connection to LEs; see~Theorem~\ref{theo:logdet}. Before introducing Theorem~\ref{prs:VarGramModular_prop} and its Corollary~\ref{prs:Trace-Logdet_prop}, which demonstrate the modularity of the Var-Gram and the submodularity of the $\log\det$ observability measure, we define parameterization matrix $\m\Gamma := \mathrm{diag}\{{\gamma_j}\}_{{j}=1}^{n_y}\hspace{-0.05cm}\in \mbb{R}^{n_y\times n_y}$ as the matrix that determines the allocation of the sensors. A node $j$ is equipped with a sensor if $\gamma_j= 1$; otherwise, $\gamma_j$ is set to $0$. We also define parameterization vector $\m \gamma$ that represents the sensor selection, i.e., a column vector $\m \gamma := \{\gamma_j\}_{j=1}^{n_y}$. The measurement mapping function $\m{h}(\m{x}_{k})$ can then be defined as $\m{h}(\m{x}_{k}):= \m \Gamma \m C \m{x}_{k}$. Note that for $\mathbf{P1}$, the variable $\m{\Gamma}$ is encoded in the set $\mathcal{S}$, such that each sensor node corresponds to a value $\gamma_j$ attributed to the set $\mathcal{S}$ at location $j$.
The following theorem demonstrates that the proposed Var-Gram under the context of SNS is a modular set function.
\begin{theorem}\label{prs:VarGramModular_prop}
The Var-Gram $\m{V}_{o}(\mathcal{S},\m{x}_{0}):=\m{V}_{o}(\mathcal{S}) \in \mathbb{R}^{n_x\times n_x}$ defined by
\begin{equation}\label{eq:VarGramSet}
\m{V}_{o}(\mathcal{S}) =  \m{\Psi}(\mathcal{S})^{\top}\m{\Psi}(\mathcal{S}),
\end{equation}
for $\mathcal{S}\subseteq\mathcal{V}$ is a modular set function under parameterization $\m{\gamma}$.
\end{theorem}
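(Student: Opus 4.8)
The plan is to show that the Var-Gram $\m{V}_{o}(\mathcal{S})$, viewed as a function of the sensor-selection set $\mathcal{S}$ (equivalently, of the parameterization vector $\m{\gamma}$), decomposes as a sum of per-sensor contributions plus a constant term, which is exactly the definition of modularity (Definition~\ref{def:modular_submodular}). The key structural fact to exploit is that the measurement mapping now takes the form $\m{h}(\m{x}_k) = \m{\Gamma}\m{C}\m{x}_k$ with $\m{\Gamma} = \diag\{\gamma_j\}_{j=1}^{n_y}$, and that in the expressions~\eqref{eq:var_measure} and~\eqref{eq:Phi} for $\m{\Psi}_0^k$, the parameterization enters only through the leftmost factor $\tfrac{\partial \m{h}(\m{x}_k)}{\partial\m{x}_k} = \m{\Gamma}\m{C}$, since the variational mapping $\m{\Phi}_0^k$ depends only on the dynamics $\tilde{\m{f}}$ and the trajectory, not on which outputs are measured.

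First I would write $\m{\Psi}_0^k(\mathcal{S}) = \m{\Gamma}\,\m{C}\,\m{\Phi}_0^k$ and stack over the horizon to get $\m{\Psi}(\mathcal{S}) = (\m{I}_{\mr{N}}\kron\m{\Gamma})\,\bar{\m{\Psi}}$, where $\bar{\m{\Psi}} := \bmat{(\m{C}\m{\Phi}_0^0)^\top,\ldots,(\m{C}\m{\Phi}_0^{\mr{N}-1})^\top}^\top$ is the ``full-sensor'' observability matrix, independent of $\mathcal{S}$. Then
\begin{equation*}
\m{V}_{o}(\mathcal{S}) = \m{\Psi}(\mathcal{S})^\top\m{\Psi}(\mathcal{S}) = \bar{\m{\Psi}}^\top (\m{I}_{\mr{N}}\kron\m{\Gamma}^\top)(\m{I}_{\mr{N}}\kron\m{\Gamma})\,\bar{\m{\Psi}} = \bar{\m{\Psi}}^\top (\m{I}_{\mr{N}}\kron\m{\Gamma}^2)\,\bar{\m{\Psi}}.
\end{equation*}
Since $\m{\Gamma}$ is diagonal with $0/1$ entries, $\m{\Gamma}^2 = \m{\Gamma} = \sum_{j\in\mathcal{S}} \m{e}_j\m{e}_j^\top$. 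Substituting and distributing the sum gives $\m{V}_{o}(\mathcal{S}) = \sum_{j\in\mathcal{S}} \bar{\m{\Psi}}^\top (\m{I}_{\mr{N}}\kron \m{e}_j\m{e}_j^\top)\,\bar{\m{\Psi}}$. Defining the matrix-valued weight $\m{W}(j) := \bar{\m{\Psi}}^\top (\m{I}_{\mr{N}}\kron \m{e}_j\m{e}_j^\top)\,\bar{\m{\Psi}} \succeq 0$ (this is $\sum_{k=0}^{\mr{N}-1} (\m{\Phi}_0^k)^\top \m{c}_j\m{c}_j^\top \m{\Phi}_0^k$ where $\m{c}_j^\top$ is the $j$-th row of $\m{C}$), we obtain $\m{V}_{o}(\mathcal{S}) = \m{W}(\emptyset) + \sum_{j\in\mathcal{S}} \m{W}(j)$ with $\m{W}(\emptyset) = \m{0}$, which matches Definition~\ref{def:modularity} verbatim (the modularity identity holds entrywise, hence for the matrix-valued set function).

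The argument is essentially a bookkeeping computation once the Kronecker factorization is set up, so I do not anticipate a deep obstacle; the one point requiring a little care is making precise the sense in which a \emph{matrix-valued} set function is ``modular'' — Definition~\ref{def:modularity} is stated for real-valued $\mathcal{O}$, so I would either note that the decomposition holds entrywise (each entry $[\m{V}_{o}(\mathcal{S})]_{pq}$ is a modular real-valued set function with weights $[\m{W}(j)]_{pq}$), or state that we adopt the natural matrix-valued extension of modularity. A second minor point is justifying that $\m{\Phi}_0^k$ — and therefore $\bar{\m{\Psi}}$ — genuinely does not depend on $\mathcal{S}$: this follows because $\m{\Phi}_0^k$ is built from $\m{I}_{n_x} + \tfrac{\partial\tilde{\m{f}}(\m{x}_k)}{\partial\m{x}_k}$ and the trajectory $\{\m{x}_i\}$, neither of which involves the output equation, as already established in Proposition~\ref{prs:discrete_var} and Remark~\ref{rmk:Phi}. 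With these two clarifications in place, the modularity of $\m{V}_{o}(\mathcal{S})$ under the parameterization $\m{\gamma}$ follows directly, and (as the paper notes) monotonicity is immediate from $\m{W}(j)\succeq 0$, setting up the submodularity of $\log\det(\m{V}_{o}(\cdot))$ in the corollary.
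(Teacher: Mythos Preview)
Your proposal is correct and follows essentially the same approach as the paper: both factor out the sensor parameterization via a Kronecker product, use $\m{\Gamma}^2=\m{\Gamma}$, decompose $\m{C}^\top\m{\Gamma}\m{C}=\sum_{j}\gamma_j\m{c}_j^\top\m{c}_j$, and arrive at $\m{V}_{o}(\mathcal{S})=\sum_{j\in\mathcal{S}}\m{V}_{o}(j)$. Your explicit remark on interpreting modularity for a matrix-valued set function (entrywise, or as the natural extension of Definition~\ref{def:modularity}) is a useful clarification that the paper leaves implicit.
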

\begin{proof}
For the proof of Theorem~\ref{prs:VarGramModular_prop}, see Appendix~\ref{apdx:proof_theo2}.
\end{proof}

Notice that from Definition~\ref{def:modularity}, the Var-Gram is considered a linear mapping function with respect to the sensor selection parameterization vector $\m{\gamma}$. To that end, in the following proposition, we establish the submodularity of the Var-Gram observability-based measure $\mathcal{O}(\mathcal{S})$.
\begin{mycor}\label{prs:Trace-Logdet_prop}
Let $\mathcal{O}(\mathcal{S}):2^{\mathcal{V}}\rightarrow\hspace{-0.05cm}\mbb{R}$  be a set function defined as
\begin{equation}
\mathcal{O}(\mathcal{S}):=\log\det\left({\m V}_{o}(\mathcal{S})\right),	 \label{eq:logdet_submodular}
\end{equation}
for $\mathcal{S}\subseteq\mathcal{V}$. Then $\mathcal{O}(\mathcal{S})$ is a submodular monotone increasing set function.
\end{mycor}
\begin{proof}\label{proof:submodular-log-det}
	Let $\mc{O}_s: 2^{\mathcal{V}\setminus\{s\}} \rightarrow \mathbb{R}$ denote a derived set function defined as
\begin{align*}
			\mc{O}_s(\mathcal{S}) & =\mr{log}\,\mr{det} \m{V}_o\left({\mathcal{S} \cup\{s\}}\right)-\mr{log}\,\mr{det} \m{V}_{o}(\mathcal{S}),\\
			& =\mr{log}\,\mr{det}\left(\m{V}_{o}(\mathcal{S})+\m{V}_{o}(\{s\})\right)-\mr{log}\,\mr{det} \m{V}_{o}(\mathcal{S}).
	\end{align*}
	
	We first show $\mc{O}_s(\mathcal{S})$ is monotone decreasing for any $s \in \mathcal{V}$. That being said, let $\mathcal{A} \subseteq \mathcal{B}\subseteq \mathcal{V}\setminus\{s\}$, and let $\m{V}_o(\tilde{\m{c}}) = \m{V}_o(\mathcal{A}) +\tilde{\m{c}}\left( \m{V}_o(\mathcal{B})- \m{V}_o(\mathcal{A})\right)$ for $\tilde{\m{c}}\in[0,1]$. Then for 
		\begin{align*}
			\mc{\tilde{O}}_s(\m{V}_o(\tilde{\m{c}}))=\mr{log}\,\mr{det}\left(\m{V}_o(\tilde{\m{c}})+\m{V}_{o}(\mathcal{S})\right)-\mr{log}\,\mr{det}\left(\m{V}_o(\tilde{\m{c}})\right),
		\end{align*}
		we obtain the following
		\begin{align*}\label{eq:proof:cor3}
			\frac{\mr{d}}{\mr{d} \tilde{\m{c}}} \mc{\tilde{O}}_s(\m{V}_o(\tilde{\m{c}}))
				&=
			\mr{trace}\Big[
			\left(\left(\m{V}_o(\tilde{\m{c}})+\m{V}_{o}(\mathcal{S})\right)^{-1}-\m{V}_o(\tilde{\m{c}})^{-1}\right)\\
			&\quad\quad\quad\quad\;
			\left(\m{V}_o(\mathcal{B})-\m{V}_o(\mathcal{A})\right) \Big] \leq 0,
		\end{align*}
		where $
			\left(\left(\m{V}_o(\tilde{\m{c}})+\m{V}_{o}(\mathcal{S})\right)^{-1}-\m{V}_o(\tilde{\m{c}})^{-1}\right) \preceq 0$, and $\left(\m{V}_o(\mathcal{B})-\m{V}_o(\mathcal{A})\right)\succeq 0$, then the above inequality holds. Then, for 
			$\tfrac{\mr{d}}{\mr{d} \tilde{\m{c}}} \mc{\tilde{O}}_s(\m{V}_o(\tilde{\m{c}})) \leq 0$, it follows that $\mc{\tilde{O}}_s(\m{V}_o({1})) \leq \mc{\tilde{O}}_s(\m{V}_o({0}))$, which implies that $\mc{O}_s(\mathcal{A}) \geq \mc{O}_s(\mathcal{B})$. Then, by definition of derived set function, we have $ \mc{O}_s(\mathcal{A})= \mathcal{O}(\mathcal{A}\cup\{s\}) - \mathcal{O}(\mathcal{A})\geq 	\mc{O}_s(\mathcal{B}) = \mathcal{O}(\mathcal{B}\cup\{s\}) -  \mathcal{O}(\mathcal{B})$. We have therefore shown that $\mc{O}(\mathcal{S})$ is submodular and $\mc{O}_s$ is monotone decreasing. Then, by the additive property of $\m{V}_{o}(\mathcal{S})$ (see~\cite{Summers2016}) we have $\mc{O}(\mathcal{S})$ being monotone increasing. The proof is analogous and a corollary to the results in~\cite[Theorem 6]{Summers2016} and~\cite[Lemma 3]{Zhou2019}.
\end{proof}

Theorem~\ref{prs:VarGramModular_prop} provides evidence that the Var-Gram is modular, which in turn implies the submodularity of the $\log\det$ observability measure; see~Corollary~\ref{prs:Trace-Logdet_prop}. This submodularity is a direct consequence of the modularity of the proposed Gramian. Note that the $\log\det$ of the Var-Gram can have zero eigenvalues and still remain submodular and monotone increasing. In the study~\cite{Zhou2019}, the observability measures are based on the Lie derivative matrix $\m{O}_{l}$. However, the submodular properties discussed above hold if and only if $\m{O}_{l}$ is full rank. In contrast, when considering the Var-Gram, there is no such restriction. The submodularity of the $\log\det$ still holds in rank-deficient situations. These conditions can occur when not enough sensing nodes are chosen, resulting in a system that is not yet observable.

Having a submodular set function allows one to exploit computationally tractable algorithms to solve $\mathbf{P1}$. Typically, a greedy algorithm, with a running time complexity of $\mathcal{O}(|\mathcal{S}||\mathcal{V}|)$, is employed to solve the submodular problem---see~\cite[Algorithm 1]{Kazma2023f}. Note that for a set function $\mc{O}: 2^{\mathcal{V}} \rightarrow \mathbb{R}$ that is submodular and monotone increasing, the algorithm has the following theoretical performance guarantee
\begin{align*}
	\mc{O}^*_{\mathcal{S}} -\mc{O}(\emptyset) \geq \left(1-\tfrac{1}{e}\right)\left(\mc{O}^*-\mc{O}(\emptyset)\right), \quad \text{with}\; \mc{O}(\emptyset) =0,	\end{align*}
where $\mc{O}^*$ is the optimal solution of the SNS problem and $\mc{O}^*_{\mathcal{S}}$ is the solution obtained from the greedy algorithm. Given that $(1 - 1/e) \approx 0.63$, the solution $\mc{O}_{\mathcal{S}}^*$ is guaranteed to be at least $63\%$ of the optimal value $\mc{O}^*$. However, in practice, for submodular set maximization problems, an accuracy of $99\%$ can be achieved~\cite{Summers2016}. In the section below, we demonstrate the applicability of the Var-Gram to SNS in nonlinear networks.

\vspace{-0.4cm}
\section{Numerical Case Studies}\vspace{-0.15cm}\label{sec:casestudies}
To demonstrate the effectiveness of the proposed method, we investigate the following research questions.
\begin{itemize}[]
\item $(\mathrm{Q}1)$ Theorem~\ref{theo:Equivelence} establishes the equivalence between the Var-Gram and Empr-Gram.  Does the Var-Gram dynamically depict the intrinsic relations between the state variables, and is it computationally more tractable than the Empr-Gram? \item $(\mathrm{Q}2)$ Does the established observability condition presented in~Theorem~\ref{theo:spectral-radius} hold true for the system under study?
\item  $(\mathrm{Q}3)$ Having provided evidence regarding the modularity of the Var-Gram, does solving $\mathbf{P1}$ result in optimal sensor node selections? Is the proposed optimal SNS problem scalable for larger nonlinear networks?
\end{itemize}
\begin{table}[t]
\fontsize{9}{9}\selectfont
\centering 
	\caption{Computational time of the observability Gramians. }
\label{tab:ODE_disc}
	\vspace{-0.2cm}
	\renewcommand{\arraystretch}{1.3}
	\begin{tabular}{l|l|l|l|l}
		\midrule \hline
		\multirow{2}{*}{Network} & \multicolumn{1}{c|}{Perturbation}                             & \multicolumn{2}{c}{Computational Time $(\mr{s})$}                           \\ \cline{2-4} 
		& \multicolumn{1}{c|}{$\alpha_{L}$} & \multicolumn{1}{c|}{Var-Gram} & \multicolumn{1}{c}{Empr-Gram} \\ \hline
		\multirow{2}{*}{$\mr{H}_2\mr{O}_2$} 
		& \multicolumn{1}{c|}{$20\%$} & \multicolumn{1}{c|}{$0.0043$} & \multicolumn{1}{c}{$7.38$} \\ \cline{2-4} 
		& \multicolumn{1}{c|}{$30\%$} & \multicolumn{1}{c|}{$0.0032$} & \multicolumn{1}{c}{$8.53$}  \\ \cline{2-4} \hline
		\multirow{2}{*}{$\mr{GRI}30$} 
		& \multicolumn{1}{c|}{$20\%$} & \multicolumn{1}{c|}{$0.489$} & \multicolumn{1}{c}{$115.05$}  \\ \cline{2-4} 
		& \multicolumn{1}{c|}{$30\%$} & \multicolumn{1}{c|}{$0.467$} & \multicolumn{1}{c}{$112.06$} \\ \cline{2-4} 
		\toprule \bottomrule
	\end{tabular}
	\vspace{-0.1cm}
\end{table}
\setlength{\textfloatsep}{3pt}

We consider a general nonlinear combustion reaction network~\cite{smirnov2014modeling} with a state-space formulation of the following form
\begin{equation}\label{eq:combustion}
\dot{\m {x}}(t)=\Theta \boldsymbol{\psi}\left(\m {x}(t)\right),
\end{equation}
where $\boldsymbol{\psi}\left(\m {x}\right)=[\psi_{1}\left(\m {x} \right),\psi_{2}\left(\m {x} \right),\ldots, \psi_{N_{r}}\left(\m {x} \right)]^{\top}$, such that $\psi_{j}$ for $j\in\{1,\;2,\;\ldots\;, \; N_{r}\}$ are the polynomial functions of concentrations. State vector $\m {x}=[x^{1},\; x^{2},\;\cdots\;,\; x^{n_{x}}]$, represents the concentrations of $n_x$ chemical species. The constant matrix $\Theta \in \mathbb{R}^{n_{x}\times N_{r}}$ is defined entrywise as $[\theta_{ij}] := [w_{ji}-q_{ji}]$, where $q_{ji}$ and $w_{ji}$ are the stoichiometric coefficients, with $i\in\{1,\ldots,n_x\}$ and $j\in\{1,\ldots,N_r\}$.
The number of chemical reactions is denoted by $N_{r}$ and the list of reactions can be written as $\sum_{i=1}^{n_{x}}q_{ji} \mathcal{R}_{i} \rightleftarrows  \sum_{i=1}^{n_{x}} w_{ji}\mathcal{R}_{i}, \; j \in \{1,\;2\;,\;\cdots\;, \; N_{r}\}$, where $\mathcal{R}_{i}$, $i\in \{1,\;2,\;\cdots\;,\; n_{x}\}$ are the chemical species. 

\begin{figure}[t]
\centering
\includegraphics[keepaspectratio=true,scale=0.96]{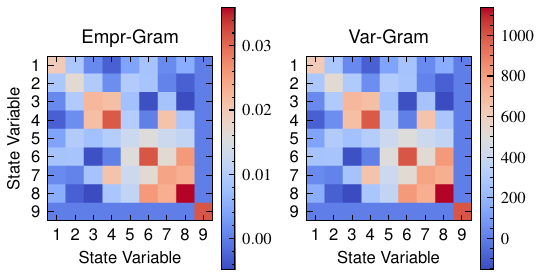}
\vspace{-0.7cm}
\caption{Mapping of the Empr-Gram (left) and Var-Gram (right). The square colors indicate the strength and direction of the relations between the variables. Color contrast represents the strength of the relation, while the color itself indicates the sign of the interrelations: red for positive and blue for negative correlation between state variables. Notice that along the diagonal, values are always positive. The two Gramians are shown to be equivalent (refer to Theorem~\ref{theo:Equivelence}).}
	\label{fig:Gram}
	\vspace{-0.4cm}
\end{figure}

\begin{figure}[t]
	\centering
	\vspace{0.1cm}
	\includegraphics[keepaspectratio=true,scale=0.55]{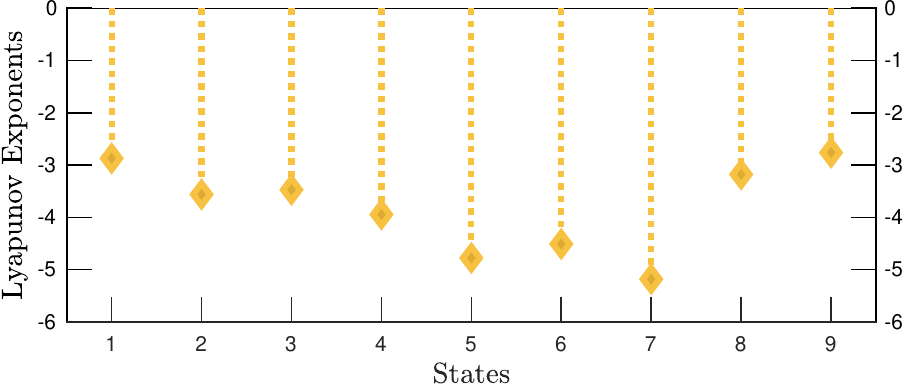}
\vspace{-0.2cm}
	\caption{LEs computed for the $\mr{H}_2\mr{O}_2$ reaction network. As indicated by the direction of the dotted lines and the location of the diamond markers, the exponents $\lambda_{L,i}$ are all negative and therefore satisfy the observability condition in Theorem~\ref{theo:spectral-radius}.}\label{fig:LyapExp}
	\vspace{-0.1cm}
\end{figure}

\begin{figure*}[t]
	\centering
	\subfloat{\includegraphics[keepaspectratio=true,scale=0.96]{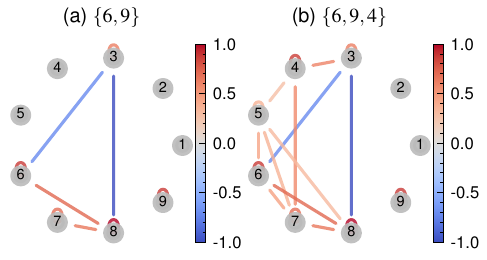}}	\subfloat{\includegraphics[keepaspectratio=true,scale=0.96]{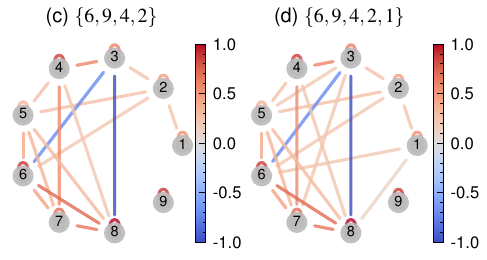}}	
	\vspace{-0.2cm}
	\caption{Optimally selected subset of sensor nodes for the $\mr{H}_2\mr{O}_2$ combustion network under sensor ratios $r=[2,3,4,5]/9$. Sub-figures (a)-(d) depict the sequential order of sensor selections and the corresponding nodes observed. The normalized observability relation between state variables is indicated by the edge color contrast for each sensor ratio. The color intensity represents the strength of observability.}
\label{fig:allocations}
	\vspace{-0.3cm}
\end{figure*}

We study two combustion reaction networks: $(N1)$ an $\mr{H}_2\mr{O}_2$ network that has $N_r = 27$ reactions and $n_x=9$ chemical species and $(N2)$ a $\mr{GRI30}$ network that has $N_r = 325$ reactions and $n_x=53$ chemical species. For specifics regarding system parameters and definitions, we refer the readers to our previous work~\cite[Section $\mr{V}$]{Kazma2023f}. The discretization constant is $T=1\cdot 10^{-12}$ and observation window of $\mr{N} = 1000$ is chosen. The choice of discretization constant is a result of analyzing the system's initial condition response. For the computation of Empr-Gram, the constant $\eps$ is chosen as $\eps = 10^{-4}$. The simulations are performed using the Cantera toolbox~\cite{Cantera} in MATLAB using a MacBook Pro with an Apple M1 Pro chip, a 10-core CPU, and 16 GB RAM. The source code for the case studies can be found in~\cite{KazmaVarGram}.
\vspace{-0.1cm}
\subsection{Observability of a combustion reaction network}\vspace{-0.1cm}\label{sec:case_combustion}
We analyze the proposed Gramian and present a comparison to the Empr-Gram.
The comparison between the studied Gramians for $\mr{H}_2\mr{O}_2$ network is depicted in Fig.~\ref{fig:Gram}. For brevity, we did not include the observability matrix for $\mr{GRI}30$ network. It is clear that the interrelation between the variables is equivalent for the two Gramian formulations. The depicted equivalence is a consequence of Theorem~\ref{theo:Equivelence}. However, as mentioned earlier~\cite{Krener2009}, the Empr-Gram requires proper scaling of the state variables, which can lead to small eigenvalues. This is evident by the strength or amplitude of the state interrelations which is the magnitude of $\{10^{-3}\}$; it is dynamically scaled to a magnitude of $\{10^{3}\}$ in the Var-Gram. This is important since under a dynamic system response, any small perturbations to initial conditions can appear naturally in the Var-Gram formulation.

Having numerically illustrated the equivalence mentioned above, we now present numerical evidence demonstrating the computational efficiency of the proposed method as compared to the Empr-Gram. The computational time required to compute the Var-Gram and the Empr-Gram for both combustion networks $N1$ and $N2$, under different perturbations $\alpha_{L}$ to initial conditions $\m{x}_{0}$, is presented in Tab.~\ref{tab:ODE_disc}. Perturbation term $\alpha_{L}$ is applied to simulate the system under different transient conditions. It is clear that the Var-Gram is more amenable for scaling for larger nonlinear networks due to its lower computational cost.
The aforementioned observations answer the questions in~$\mr{Q}1$. 

\begin{figure}[t]
	\vspace{-0.2cm}
	\centering
\includegraphics[keepaspectratio=true,scale=0.96]{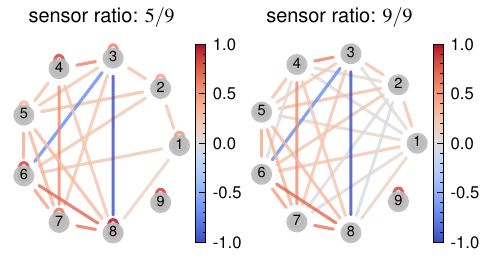}
	\vspace{-0.4cm}
	\caption{Normalized observability relation between state variables based on $(5/9\approx0.7)$ sensor node ratio (left) and full node sensor ratio (right) for the sensed nodes within the nonlinear combustion network. For left figure, sensed nodes are optimally chosen as $\{1,\;2\;,4\;,6\;,9\}$.}\label{fig:OSP}
	\vspace{-0.2cm}
\end{figure}

Fig.~\ref{fig:LyapExp} depicts the value of the LEs for the $\mr{H}_2\mr{O}_2$ combustion network. The LEs are all negative and satisfy the observability condition presented in Theorem~\ref{theo:spectral-radius}.
This is true as a consequence of Theorem~\ref{theo:logdet}, since for any $\lambda_{i}$ value less than $1$, the LEs $\mr{log}{\lambda_{i}} = \lambda_{L,i}<0$, i.e., are negative. This verifies that the system is observable when considering $|\mathcal{S}| = n_x$, i.e., the system has sensors employed on all sensor nodes. As such, the condition posed in research question $\mr{Q}2$ holds true for the considered combustion network.

\vspace{-0.4cm}
\subsection{Sensor node selection in nonlinear networks}\vspace{-0.1cm}\label{sec:SNS}
The greedy algorithm~\cite[Algorithm 1]{Kazma2023f} is employed to solve the SNS problem $\mathbf{P1}$ with~\eqref{eq:logdet_submodular} as the submodular objective function. We solve the optimal SNS problem for combustion networks $N1$ and $N2$ using the $\log\det$ measure. The sensor ratios chosen are $[0.22, 0.33, 0.44, 0.56]\times n_x$. For the $\mr{H}_2\mr{O}_2$ network this is equivalent to sensors $r=[2,3,4,5]$. The results for SNS on $N1$ for each of the sensor ratios are depicted in Fig.~\ref{fig:allocations}. The figure depicts the order in which the sensors are allocated when increasing the sensor ratio and the respective normalized internal state relations between the state variables. For instance, when observing nodes $\{6, 9\}$, the state variables at nodes $\{3, 7, 8\}$ can be inferred due to their internal state connections with node $\{6\}$. Notice that node $\{9\}$ has a self-loop thereby indicating that it is a non-interacting chemical species. This means that node $\{9\}$ is only observable when the same node is selected. The optimally selected subset of the sensor nodes is $\{1\;,2\;,4\;,6\;,9\}$ for the $\mr{H}_2\mr{O}_2$ network under a sensor number $r=5$. Notice that node $\{3\}$ is not selected as a sensing node. This is due to its negative correlation with other state variables (indicated by the blue color in Fig.~\ref{fig:allocations}), which results in a lower contribution to overall observability of the system. To illustrate the optimally selected set, Fig.~\ref{fig:OSP} depicts the normalized internal state relations for sensor ratio $5/9$ and that of a full sensor selection ratio. Note that the normalization is based on a min-max normalization process which results in a normalized observability relation ranging between $\{-1,1\}$. The strength of the relationships and interactions among the state variables for both sensor fractions enables internal state connectivity between all state variables, thereby indicating the optimality of the chosen sensor subset with $r = 5$.
Such equivalence illustrates that the sensor nodes selected using the proposed submodular set optimization framework are optimal. Furthermore, this demonstrates that only 5 sensors are required to observe all the chemical species within the $\mr{H}_2\mr{O}_2$ network. This is also evident in Fig.~\ref{fig:EstErr} where it is shown that the estimation error approaches zero for the optimal solution when considering $r=5$ for the $\mr{H}_2\mr{O}_2$ network. Note that the estimation error is related to the observability of the system. Such that, the estimation error is computed as $e=\left\|\m{x}-\hat{\m {x}} \right\|_{2}/\left\|\m {x} \right\|_{2}$, where $\m{x}$ is the true state and $\hat{\m {x}}$ is the estimate obtained by solving a general nonlinear least-squares state estimation problem over the observation horizon $\mr{N}$. Interested readers are referred to~\cite{Haber2018,Kazma2023f} for the formulation of the state estimator problem.

\begin{figure}[t]
\centering
	\includegraphics[keepaspectratio=true,scale=0.52]{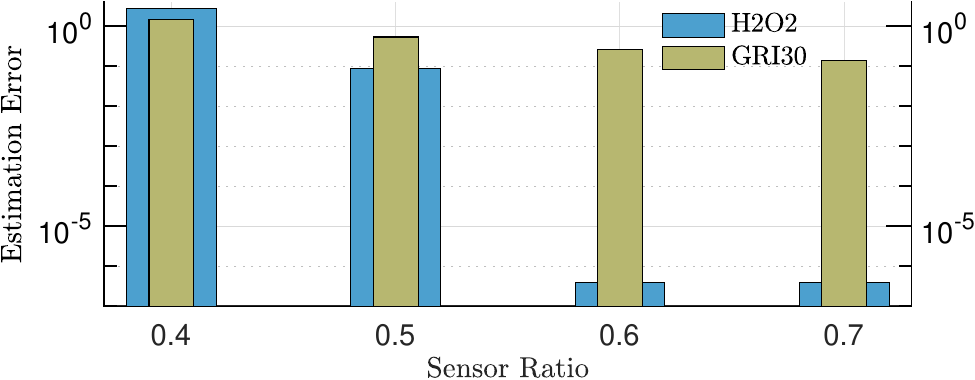}
	\vspace{-0.25cm}
	\caption{State estimation error based on the optimal SNS set $\mc{S}^*$. The performance of the state estimation depends on the degree of system observability.}\label{fig:EstErr}
		\vspace{-0.2cm}
\end{figure}
For the $\mr{H}_2\mr{O}_2$ and $\mr{GRI}30$ networks, we solve $\mathbf{P1}$ for the aforementioned sensor ratios and compute the estimation error. Notice that, for $\mr{GRI}30$ network, the estimation error decreases but does not reach an optimal error value due to a large number of non-interacting species. This indicates that additional sensors are required for better state estimation. We also note that utilizing a greedy approach yields an efficient and scalable solution to the observability-based SNS problem in nonlinear systems as compared with methods that rely on empirical data simulations---as with Empr-Gram framework. Solving $\mathbf{P1}$ for the $\mr{H}_2\mr{O}_2$ under sensor ratio $5/9$ requires $0.414 \; \mr{s}$. The computational effort increases to about $24.797 \; \mr{s}$ for the $\mr{GRI}30$ network when considering an equivalent sensor ratio. This shows that the proposed SNS framework is scalable for larger nonlinear networks. For brevity, we do not introduce or solve the SNS framework based on the Empr-Gram, and therefore, we do not provide a comparison. However, based on the computational time provided in~Tab.~\ref{tab:ODE_disc}, it is inferred that solving $\mathbf{P1}$ requires significantly more effort. The computational efficiency along with the optimality results provided by Fig.~\ref{fig:OSP} and Fig.~\ref{fig:EstErr}, illustrate the modularity of the proposed Var-Gram and its submodular observability measure $\log\det$ for SNS in nonlinear systems and thereby answers question $\mr{Q}3$. On this note, we conclude this section. 

\vspace{-0.1cm}
\section{Summary, Limitation and Future Directions}\vspace{-0.1cm}\label{sec:conclusion}
This paper introduces a new observability Gramian, (Var-Gram), for discrete-time nonlinear systems without inputs. The proposed Var-Gram is based on a discrete-time variational system representation. We prove that the Gramian is equivalent to the Empr-Gram and reduces to the observability Gramian in the linear case. A connection between the Var-Gram and LEs is investigated. We derive a spectral observability limit result based on the Var-Gram. To illustrate the approach, we demonstrate its applicability under the context of observability-based SNS.
The Var-Gram, in its current form, is not devoid of limitations. $(i)$ The method is developed for general nonlinear systems without control input. $(ii)$ The equivalence between the Var-Gram and the LEs is demonstrated for the linear measurement model case. $(iii)$ Connections and comparisons to analytical observability methods based on Lie derivatives are not investigated. The above limitations merit future work. Furthermore, extension of this method for stochastic nonlinear systems with noisy measurements is important; such extension is demonstrated for the Empr-Gram~\cite{Powel2020b}.

\appendices
\vspace{-0.1cm}
 \section{Runge-Kutta Discrete-time Model}\vspace{-0.1cm}\label{apdx:Runge-Kutta}
The nonlinear mapping function $\tilde{\m{f}}(\cdot)$ is defined for the IRK method as 
\begin{equation}
	\tilde{\m{f}}(\m{x}_0):= \tfrac{T}{4}\left(\m f(\m \zeta_{1,k+1})+3\m f(\m \zeta_{2,k+1})\right).
\end{equation}

Vectors $\m \zeta_{1,k+1},\m \zeta_{2,k+1}\in\mbb{R}^{n_x}$ are auxiliary for computing $\m x_{k+1}$ provided that $\m x_{k}$ is given. The IRK method results in the following implicit discrete-time state-space model 
\begin{align}
	\begin{split}
		\m \zeta_{1,k+1} &:= {\m x_{k}} + \tfrac{T}{4}\left(\m f(\m \zeta_{1,k+1})-\m f(\m \zeta_{2,k+1})\right), \\
		\m \zeta_{2,k+1} &:= {\m x_{k}} + \tfrac{T}{12}\left(3\m f(\m \zeta_{1,k+1})+5\m f(\m \zeta_{2,k+1})\right), \\
		\m x_{k+1} &:= \m x_{k}+ {\tfrac{T}{4}\left({\m f(\m \zeta_{1,k+1})+3\m f(\m \zeta_{2,k+1})}\right)}.
	\end{split}\label{eq:IRK_dynamics}
\end{align}

The additional layer that includes calculating auxiliary vectors $\m \zeta_{1,k+1}$ and $\m \zeta_{2,k+1}$ enables an accurate and stable approach for the discretization of a broad class of nonlinear networks.
To evaluate the partial derivative $\tfrac{\partial \tilde{\m f}\left(\m x_k\right)}{\partial \m x_0}$ for use in the discrete-time variational equations~\eqref{eq:DiscVar}, we first need to compute the partial derivative $\tfrac{\partial \m{x}_{k+1}}{\partial \m{x}_{k}}$, which is involved numerically and can be written as follows
\begin{equation}\label{eq:1}
	\begin{aligned}
		\frac{\partial \m{x}_{k+1}}{\partial \m{x}_{k}} := & \m{I}_{n_x}+\left.\frac{T}{4} \frac{\partial \m{f}\left(\m\zeta_{1, k+1}\right)}{\partial \m \zeta_{1, k+1}}\right|_{\m \zeta_{1,k+1}} \times \frac{\partial \m \zeta_{1, k+1}}{\partial \m{x}_{k}}\\ &
		+\left.\frac{3 T}{4} \frac{\partial \m{f}\left(\m \zeta_{2, k+1}\right)}{\partial \m \zeta_{2, k+1}}\right|_{\m \zeta_{2, k+1}}
		\times \frac{\partial \m  \zeta_{2, k+1}}{\partial \m{x}_{k}}.
	\end{aligned}
\end{equation}

Notice that to determine $	\frac{\partial \m{x}_{k+1}}{\partial \m{x}_{k}}$, we need to determine the partial derivatives $\tfrac{\partial \m \zeta_{1, k+1}}{\partial \m{x}_{k}}$ and $\tfrac{\partial \m \zeta_{2, k+1}}{\partial \m{x}_{k}}$. By differentiating~\eqref{eq:IRK_dynamics} with respect to $\m{x}_k$, we obtain
\begin{equation}
\begin{aligned}
	\underbrace{\left[\begin{array}{l}
			\frac{\partial \m \zeta_{1, k+1}}{\partial \m{x}_{k}} \\
			\frac{\partial \m \zeta_{2, k+1}}{\partial \m{x}_{k}}
		\end{array}\right]}_{\m{Q}}:= &\underbrace{\left[\begin{array}{l}
			\m{I}_{n_x} \\
			\m{I}_{n_x}
		\end{array}\right]}_{\m{I}_{2n_x}}+\underbrace{\left[\begin{array}{cc}
			\frac{T}{4} \frac{\partial \m{f}\left(\m\zeta_{1, k+1}\right)}{\partial \m\zeta_{1, k+1}} & -\frac{T}{4} \frac{\partial \m{f}\left(\m\zeta_{2, k+1}\right)}{\partial \m\zeta_{2,k+1}} \\
			\frac{3 T}{12} \frac{\partial \m{f}\left(\m\zeta_{1, k+1}\right)}{\partial \m\zeta_{1, k+1}} & \frac{5 T}{12} \frac{\partial \m{f}\left(\m\zeta_{2, k+1}\right)}{\partial \m\zeta_{2, k+1}}
		\end{array}\right]}_{\m{K}}
	\\& \times \left[\begin{array}{l}
		\frac{\partial \m\zeta_{1, k+1}}{\partial \m{x}_{k}} \\
		\frac{\partial \m\zeta_{2, k+1}}{\partial \m{x}_{k}}
	\end{array}\right].
\end{aligned}
\end{equation}

Assuming that the matrix $\left[\m I_{2 {n_x}}-\m{K}\right]$ is invertible (this is sufficient as a consequence of the implicit function theorem~\cite[Th. 3.3.1]{Krantz2013}), where $\m{K} \in$ $\mathbb{R}^{2 n_x \times 2 n_x}$, we have from the last expression
\begin{equation}\label{eq:2}
	\m{Q}=\left[\m I_{2 {n_x}}-\m{K}\right]^{-1} \m{I}_{2n_x},
\end{equation}
once matrix $\m{Q}$ is computed, its elements can be used to calculate the partial derivatives~\eqref{eq:1}.

Note that the auxiliary vectors $\m{\zeta}_{1, k+1}$ and $\m{\zeta}_{2, k+1}$ also depend on $\m{x}_0$ and can be obtained by simulating the system with initial condition equal to $\m{x}_0$.
Based on the implicit nature of the auxiliary vectors, the vectors are embedded in the computation of 	$\frac{\partial \m{x}_{k+1}}{\partial \m{x}_{k}}$; refer to~\eqref{eq:1}-\eqref{eq:2}. Then, the partial derivative  $\frac{\partial \tilde{\m f}\left(\m x_k\right)}{\partial \m x_0}$ under the action of the chain rule can be written as
\begin{equation}
	\frac{\partial \tilde{\m f}\left(\m x_k\right)}{\partial \m x_0}=\frac{\partial \tilde{\m f}\left( \m x_k\right)}{\partial \m x_k} \frac{\partial \m x_k}{\partial \m x_{k-1}}\frac{\partial \m x_{k-1}}{\partial \m x_{k-2}}\ldots\frac{\partial \m x_1}{\partial \m x_0},
\end{equation}
where $\frac{\partial \tilde{\m f}\left(\m x_k\right)}{\partial \m x_0}$ is essential for computing the variational mapping function $\m{\Phi}_{0}^{k}$ in~\eqref{eq:DiscVar}.
Note that the computation of $\frac{\partial \tilde{\m f}\left(\m x_k\right)}{\partial \m x_0}$ depends on the discretization method. Other methods follow similar derivations.

\vspace{-0.5cm}
\section{Proof of Theorems~\ref{theo:Equivelence},~\ref{theo:logdet}, and~\ref{prs:VarGramModular_prop}}\vspace{-0.1cm}\label{apdx:proof_theo2}
\begin{proofofc}{theo:Equivelence}
	The Empr-Gram~\eqref{eq:Emp0bsGram} can be formulated in differential form by applying the central difference definition of a directional derivative to the impulse response measurement vector $\m{\Delta Y}^{\eps}_{k}$ as~\eqref{eq:prooflem1}, see~\cite{Powel2015,Kawano2021} for additional information. 
\begin{align}\label{eq:prooflem1}
		\m{\Delta Y}^{\eps}_{k}=  \bmat{2\eps\tfrac{\partial\m{y}_{k}}{\partial\m{x}_{0}^{1}},
			\; \cdots \; ,2\eps\tfrac{\partial\m{y}_{k}}{\partial\m{x}_{0}^{n_x}}} \; \in \mbb{R}^{n_y \times n_x},
\end{align}
	where state vector $\m{x}_{0}^{i}\in \mathcal{\m{X}}_0$ is denoted as $\m{x}_{0}^{i} = \m{x}_{0}\pm \eps\m{e}_{i}\;\; \forall \;\; i \in \; \{1,\;2,\;\cdots\;,n_x\}$, thus~\eqref{eq:prooflem1} is equivalent to $2\eps\tfrac{\partial\m{y}_k}{\partial\m{x}_0}$. With that in mind, the Empr-Gram~\eqref{eq:Emp0bsGram} can be written in the following form
	\begin{equation}\label{eq:EmpDiff}
		{\m{W}}_{o}^{\partial}(\m{x}_{0}) := \sum_{k=0}^{\mr{N}-1}
		\tfrac{\partial\m{y}_k}{\partial\m{x}_0}^{\top}
		\tfrac{\partial\m{y}_k}{\partial\m{x}_0} 	\; \in \mbb{R}^{n_x \times n_x}.
\end{equation}
	
	For simplicity of exposition, we consider a linear measurement model $\m{h}(\m{x}_{k}) = {\m{C}}\m{x}$. This does not restrict the proof since we are utilizing the same measurement model for both the Var-Gram and the Empr-Gram. 
It follows from~\eqref{eq:EmpDiff} that for any time index $k$, we have 
	$\tfrac{\partial\m{y}_{k}}{\partial\m{x}_0} \equiv 
	\tfrac{\partial\m{h}(\m{x}_{k})}{\partial \m{x}_{k}}
	\tfrac{\partial \m{x}_{k}}{\partial\m{x}_0} =  {\m{C}}\tfrac{\partial \m{x}_{k}}{\partial\m{x}_0}$. In a similar approach, for the variational Gramian, we obtain $\m{\Psi}_{0}^{k} = {\m{C}}\m{\Phi}_{0}^{k} = {\m{C}}\big(\m{I}_{n_x}  +\tfrac{\partial\tilde{\m{f}}(\m{x}_{k})}{\partial\m{x}_{k}}\big)\tfrac{\partial\m{x}_{k}}{\partial\m{x}_{0}}$.
Note that $\tfrac{\partial \m{x}_{k+1}}{\partial\m{x}_0} = \big(\tfrac{\partial \m{x}_{k}}{\partial\m{x}_{0}} +\tfrac{\partial\tilde{\m{f}}( \m{x}_{k})}{\partial\m{x}_{0}}\big)$ is obtained by taking the partial derivative of~\eqref{eq:RKutta_dynamics_compact} about $\m{x}_0$ and under the action of the chain rule, the partial derivative $\tfrac{\partial \m{x}_{k}}{\partial\m{x}_0}$ becomes a composition mapping similar to $\m{\Psi}_{0}^{k}$; refer to Remark~\ref{rmk:Phi}. Hence, the two Gramians are equivalent; this completes the proof.\end{proofofc}

\begin{proofofc}{theo:logdet}\label{proof:log-det}
Let $\m{h}(\m{x}_{k}) = {\m{C}}\m{x}$. Then the observability matrix can be written as $\m{\Psi} \hspace{-0.05cm}=\hspace{-0.05cm} 
	\left\{\tfrac{\partial{\m{h}(\m{x}_{k})}}{\partial\m{x}_k}\m{\Phi}_{0}^{k}\right\}^{\mr{N}-1}_{k=0}= \left\{\m{C}\m{\Phi}_{0}^{k}\right\}^{\mr{N}-1}_{k=0}$. From~\eqref{eq:prop_obs_gram}, for an observation horizon $\mr{N}$, the Var-Gram is given by
	\begin{align*} \m{V}_{o} &\hspace{-0.1cm}=\hspace{-0.15cm}
		\sum_{k=0}^{\mr{N}-1} \bmat{{\m{C}} \prod^{i=k}_{1} \m{\Phi}^{i}_{i-1}}^{\top}
		\bmat{{\m{C}}\prod^{i=k}_{1}\m{\Phi}^{i}_{i-1}},\hspace{-0.2cm}\\
		&\hspace{-0.1cm}=\hspace{-0.15cm} \sum_{k=0}^{\mr{N}-1}
		\hspace{-0.1cm} \bmat{{{\prod^{i=k}_{1}\m{\Phi}^{i}_{i-1}}^{\hspace{-0.2cm}\top}\m{C}^{\top}}}
		\hspace{-0.15cm}
		\bmat{{\m{C}}\prod^{i=k}_{1}\m{\Phi}^{i}_{i-1}}
		\hspace{-0.1cm}=\hspace{-0.15cm} \sum_{k=0}^{\mr{N}-1}{\hspace{-0.05cm}\prod^{i=k}_{1}
			\hspace{-0.1cm}{\m{\Phi}^{i}_{i-1}}^{\hspace{-0.1cm}\top}}\hspace{-0.1cm} {\m{\Phi}^{i}_{i-1}},
\vspace{0.1cm}
	\end{align*}
where $\m{C}^{\top}\m{C} = \eye_{n_x} \in \mathbb{R}^{n_x \times n_x}$, considering a linear measurement model with sensors measuring all the state variables at each node. Such that, $\eye_{n_x}$ is parameterized with the sensor parameterization vector $\m{\gamma}$. The above holds true as a result of inner product multiplication $\m{\Psi}^{\top}\m{\Psi}$ being equivalent to $\sum_{k=0}^{\mr{N-1}}{\m{\Psi}^{k}_0}^{\top}\m{\Psi}^{k}_0$. Following this, and based on the max-type superposition property of LEs~\cite[Theorem~2.1.2]{Barreira1998}, we obtain the following by taking the $\log\det$ of $\m{V}_o$.
	\begin{align*} \log\det (\m{V}_{o})
		&\hspace{-0.08cm}=\hspace{-0.08cm}
\log\det \hspace{-0.1cm} \left(\sum_{k=0}^{\mr{N}-1}\; {\prod^{i=k}_{1}{\m{\Phi}^{i}_{i-1}}^{\top}}
		{\m{\Phi}^{i}_{i-1}}\right),\hspace{-0.1cm}\\
		&\hspace{-0.08cm}=\hspace{-0.08cm}
\log\det \hspace{-0.1cm} \left({\m{\Phi}_{0}^{\mr{N-1}}}^{\top}{\m{\Phi}_{0}^{\mr{N-1}}}\hspace{-0.05cm}\right)\hspace{-0.1cm},\\
		&\hspace{-0.08cm}=\hspace{-0.08cm}
\mr{log}\left(\prod_{i=1}^{n_x}\lambda_i\right) = \sum_{i=1}^{n_x} \mr{log}\lambda_i.\hspace{-0.2cm}
\end{align*}

	On a similar note, taking the $\mr{trace}\left( \lambda(\m{\Lambda}_{L})\right)$ for $k\rightarrow \mr{N} \approx \mr{N}-1$, we obtain 
	\begin{align*} \mr{trace}\left(\lambda \left(\frac{1}{k}\log \left(\left({\m{\Phi}_{0}^{k}}^{\top}\m{\Phi}_{0}^{k}\right)^{1/2}\right)\right)\right)
=
		\tfrac{1}{2\mr{N}} \mr{trace}(\m{\lambda}_{L}),
		\hspace{-0.2cm}
	\end{align*}
	such that, by evaluating $\mr{trace}( \m{\lambda}_{L})$, we obtain $\tfrac{1}{2\mr{N}}\sum_{i=1}^{n_x}\lambda_{L,i} = \sum_{i=1}^{n_x}\mr{log}\lambda_{i} $. This holds true given that $\m{\lambda}_{L}$ is a diagonal matrix, thereby completing the proof.
\end{proofofc}

\begin{proofofc}{prs:VarGramModular_prop}\label{proof:proofModular}
For any $\mathcal{S}\subseteq\mathcal{V}$, observe that
\begin{align*}
		\m{V}_{o}({\mathcal{S}})
		&\hspace{-0.05cm}=\hspace{-0.05cm}
		\bmat{\left\{\tfrac{\partial{\m{h}(\m{x}_{k})}}{\partial\m{x}_k}\m{\Phi}_{0}^{k}\right\}^{\mr{N-1}}_{k=0}}^{\top}
		\bmat{\left\{\tfrac{\partial{\m{h}(\m{x}_{k})}}{\partial\m{x}_k}\m{\Phi}_{0}^{k}\right\}^{\mr{N-1}}_{k=0}},\\
		&\hspace{-0.05cm}=\hspace{-0.05cm}
		\bmat{\bmat{\m I \otimes \m \Gamma \m C}\left\{\m{\Phi}_{0}^{k}\right\}^{\mr{N-1}}_{k=0}}^{\top}
		\bmat{\bmat{\m I \otimes \m \Gamma \m C}\left\{\m{\Phi}_{0}^{k}\right\}^{\mr{N-1}}_{k=0}},\\
&\hspace{-0.05cm}=\hspace{-0.05cm}
		\sum_{k=0}^{\mr{N}-1} \bmat{\prod^{i=k}_{1}
			\m{\Phi}^{i}_{i-1}}^{\top}\bmat{\m{C}^{\top} \Gamma^{2}\m{C}}
		\bmat{\prod^{i=k}_{1}\m{\Phi}^{i}_{i-1}},\hspace{-0.2cm}
\end{align*}
	where $\m \Gamma^2= \m \Gamma$, since it is a binary matrix. Now, denoting $\m c_j\in\mbb{R}^{1\times n_x}$ as the $j$-th row of $\m C$, then
	\begin{align*}
		\m{V}_{o}({\mathcal{S}})
&\hspace{-0.05cm}=\hspace{-0.05cm}
		\sum_{k=0}^{\mr{N}-1} \bmat{\prod^{i=k }_{1}
			\m{\Phi}^{i}_{i-1}}^{\top} \left(\sum_{j=1}^{n_y} \gamma_j\m c_j^\top \m c_j \right) \bmat{\prod^{i=k}_{1}
			\m{\Phi}^{i}_{i-1}},\hspace{-0.2cm}\\
		&\hspace{-0.05cm}=\hspace{-0.05cm}
		\sum_{j=1}^{n_y} \gamma_j \left(\sum_{k=0}^{\mr{N}-1} \bmat{\prod^{\substack{i=k}}_{1} \m{\Phi}^{i}_{i-1}}^{\top}  \bmat{\m c_j^\top \m c_j}
		\bmat{\prod^{\substack{i=k}}_{1} \m{\Phi}^{i}_{i-1}}\right),\\
		&\hspace{-0.05cm}=\hspace{-0.05cm}
		\sum_{j\in \mathcal{S}} \left\{{\m{\Phi}_{0}^{k}}^\top \hspace{-0.05cm}\m c_j^\top \m c_j \m{\Phi}_{0}^{k}\right\}^{\mr{N}-1}_{k=0} =  \sum_{j\in \mathcal{S}} \m{V}_{o}(j).\end{align*}
	
	The notation $j\in \mathcal{S}$ corresponds to every activated sensor such that $\gamma_j = 1$. This shows that $\m{V}_{o}(\mathcal{S})$ is a linear matrix function of $\gamma_j$ satisfying modularity as
	$\m{V}_{o}(\mathcal{S})= \m{V}_{o}(\emptyset) + \sum_{j\in\mathcal{S}} \m{V}_{o}(j)$, (Definition~\ref{def:modularity}). This concludes the proof.
\end{proofofc}

\vspace{-0.2cm}
\section{Nonlinear Output Mapping Functions}\vspace{-0.1cm}\label{apndx:generaloutput}
The following establishes how a generalized measurement function affects the derivation of certain equations within the proofs in this manuscript. The measurement equation can be written as 
\begin{align*}\label{prop:eq1}
	\m{\Psi} \hspace{-0.05cm}&=\hspace{-0.05cm}
	\left\{\tfrac{\partial{\m{h}(\m{x}_{k})}}{\partial\m{x}_k}\m{\Phi}_{0}^{k}\right\}^{\mr{N}-1}_{k=0}
	\hspace{-0.05cm}=\hspace{-0.05cm}
	\left\{\tfrac{\partial{\m{h}(\m{x}_{k})}}{\partial\m{x}_k}
	\prod^{i=k}_{1}\m{\Phi}^{i}_{i-1}\right\}^{\mr{N}-1}_{k=0}, \hspace{-0.2cm}\\
	\hspace{-0.05cm}&=\hspace{-0.05cm}
	\bmat{
	\big(\tfrac{\partial{\m{h}(\m{x}_{0})}}{\partial\m{x}_0}\m\Phi_0^0\big)^\top,
	\big(\tfrac{\partial{\m{h}(\m{x}_{1})}}{\partial\m{x}_1}\m\Phi_0^1\big)^\top,
		\ldots,
	\big(\tfrac{\partial{\m{h}(\m{x}_{\mr N-1})}}{\partial\m{x}_{\mr N-1}}\m\Phi_0^{\mr N-1}\big)^\top
	}^{\top}\hspace{-0.2cm},
\end{align*}
while noting that the multiplication of any matrix-valued vector with its transpose is equivalent to its matrix dot-product, the Var-Gram $\m V_o(\m x_0)=\m\Psi(\m x_0)^\top\m\Psi(\m x_0)$, over horizon $\mr{N}$, can be written as
\begin{equation}\label{eq:proof_prop_obs_matrix}
	\m V_o(\m x_0)
	= \sum_{k=0}^{\mr N-1} {\m\Phi_0^k}^\top
	\underbrace{\m C(\m x_k)^\top \m C(\m x_k)}_{=:~\tilde{\m C}(\m x_k)\succeq \m 0} \m\Phi_0^k,
\end{equation}
where $\m{\Phi}_{0}^{k} ={\big(\m{I}_{n_x}  +\tfrac{\partial\tilde{\m{f}}(\m{x}_{k})}{\partial\m{x}_{k}}\big)\tfrac{\partial\m{x}_{k}}{\partial\m{x}_{0}}}$ and ${\m{C}}(\m{x}_k) = \tfrac{\partial\m{h}(\m{x}_{k})}{\partial \m{x}_{k}}$. 
Thus, when constructing the Var-Gram, we obtain a multiplication of the differential of measurement mapping function~\eqref{eq:DT_measurement_model} with respect to $\m{x}_k$. The multiplication results in positive semidefinite symmetric matrices $\tilde{\m{C}}(\m{x}_k) = {\m{C}(\m{x}_k)}^{\top}\m{C}(\m{x}_k)$, where $\m{C}(\m{x}_k)$ is a constant measurement matrix that measures at most all the state variables (under full sensor selection). Thus,~\eqref{eq:proof_prop_obs_matrix} shows that considering a nonlinear output mapping function results in a trajectory dependent positive semidefinite matrix $\tilde{\m C}(\m x_k)\succeq 0$ that weights the variational mapping function $\m{\Phi}_0^k$ without scaling the exponential growth rates characterized by LEs~\cite[Theorem 2.1.2]{Barreira1998}, under the assumption of regularity.

\vspace{-0.1cm}
\balance
\bibliographystyle{IEEEtran}
\bibliography{references}
\end{document}